\newcommand{\eps}{\varepsilon}
\newtheorem{define}{Definition}
\newtheorem{theo}[define]{Theorem}
\newtheorem{lem}[define]{Lemma}
\newcommand{\N}{\mathbb{N}}
\newcommand{\Z}{\mathbb{Z}}
\newcommand{\R}{\mathbb{R}}
\title{ \LARGE \bf A Stochastic Markov Model for Coordinated Molecular Motors }
\author{
	\authorblockN{
	Donatello Materassi \authorrefmark{1},
	Subhrajit Roychowdhury \authorrefmark{1},
	and Murti V.~Salapaka\authorrefmark{1}, %
	\authorblockA{
		\authorrefmark{1}
		Department of Electrical and Computer Engineering,\\
		University of Minnesota,\\
		200 Union St SE,\\
		55455, Minneapolis (MN).\\
		{\tt\small \{mater013|subhra|murtis\}@umn.edu}}\\
	}
}
\begin{document}
\maketitle

\begin{abstract}
	Many cell functions are accomplished thanks to intracellular transport mechanisms of macromolecules along filaments.
	Molecular motors such as dynein or kinesin are proteins playing a primary role in these processes.
	The behavior of such proteins is quite well understood when there is only one of them moving a cargo particle. Indeed, numerous in vitro experiments have been performed to derive accurate models for a single molecular motor.
	However, in vivo macromolecules are often carried by multiple motors.
	The main focus of this paper is to provide an analysis of the behavior of more molecular motors interacting together in order to improve the understanding of their actual physiological behavior.
	Previous studies provide analyses based on results obtained from Monte Carlo simulations. Different from these studies, we derive an equipollent probabilistic model to describe the dynamics of multiple proteins coupled together and provide an exact theoretical analysis. We are capable of obtaining the probability density function of the motor protein configurations, thus enabling a deeper understanding of their behavior.
\end{abstract}

\section{Introduction}\label{sec:intro}
Many biological cell functions are accomplished thanks to intracellular transport processes. A key role in these processes is played by molecular motors that are proteins converting chemical energy into mechanical work.
Among the most important molecular motors, there are kinesin and dynein.
These are classes of molecular motors transporting many kind of different molecules by moving along filaments called microtubuli.
They accomplish this by converting the chemical energy stored in the ATP molecules produce mechanical work and thus represent powerful nanomachines.
Both kinesin and dynein have a similar structure consisting of two catalytically active motor domains, called ``heads'' that can bind both to a microtubule and ATP. The two heads are hinged together by a polypeptidic linkage that terminates in a distal cargo binding domain. These motor heads bind and unbind to the microtubule successively by utilizing the energy available from ATP hydrolysis and enable the motor to walk \cite{CarCro2005}.

The study of molecular motors has developed over the past decade and these investigations have unraveled many previously unknown finer details at the microscopic level which have propelled various theoretical discussions of the mechanisms underlying the dynamics of molecular motors. Given the small sizes of molecular motors (each head of kinesin has a linear size of the order of 10 nm), they are supposed to be perturbed by the state of the local molecular environment and thermal fluctuations around them. But the directed walk of these cytoskeletal motors clearly reveals that they are robust enough to overcome the fluctuations surrounding them.
The behavior of motor proteins is quite well understood when there is only one of them moving a cargo particle. Numerous in vitro experiments have been performed to derive accurate models to describe the motion of a single molecular motor. Indeed, it is possible to monitor the motion of a single molecular motor under highly tunable experimental conditions and obtain measurements with accurate spatial and time resolution \cite{SvoSch93, MilYil06a, CarVer08}.
As a result, there are many theoretical descriptions of motor-protein mechanisms that recognize their multiple conformational transitions taking into account the complex mechanochemical processes involved \cite{LieLip09}.\\
However, the chemical and mechanical properties of molecular motors in vivo are not yet completely characterized. Furthermore, there is evidence that, in vivo, macromolecules are carried by multiple motors \cite{KunVer08}.
Indeed, this can be a cause of the robustness of their behavior and of also a cause of increased efficiency.
The possible dynamics of an ensemble of motor proteins carrying the same cargo particle is still largely unknown. For example, it is not clear whether synchronization phenomena occur, the motors move independently or a ``tug-of-war'' is in action \cite{KurKim05, SopRai09}.
Understanding these mechanisms is one of the more challenging problems that require concerted efforts by chemists, physicists, biologists and engineers.

\section{Markov model for molecular motors}
In this section we derive a Markov model to describe the motion of an ensemble of molecular motors carrying a cargo.
While the behavior of single molecular motors such as kinesin or dynein is quite well understood, we intend to focus on the possible properties of synchronization/coordination that can be exhibited by more motors connected to the same cargo.\\
Molecular motors are proteins moving macro-molecules and organelles along filaments (i.e. microtubuli).
Their motion occurs by discrete steps. Their heads move forward by hydrolyzing ATP and binding to proper locations that are equally spaced on the microtubule. Every motor can attach to the cargo molecule via a flexible linkage (see Figure~\ref{fig:SteppingMotor}).
\begin{figure}[hbt]
	\includegraphics[width=0.95\columnwidth]{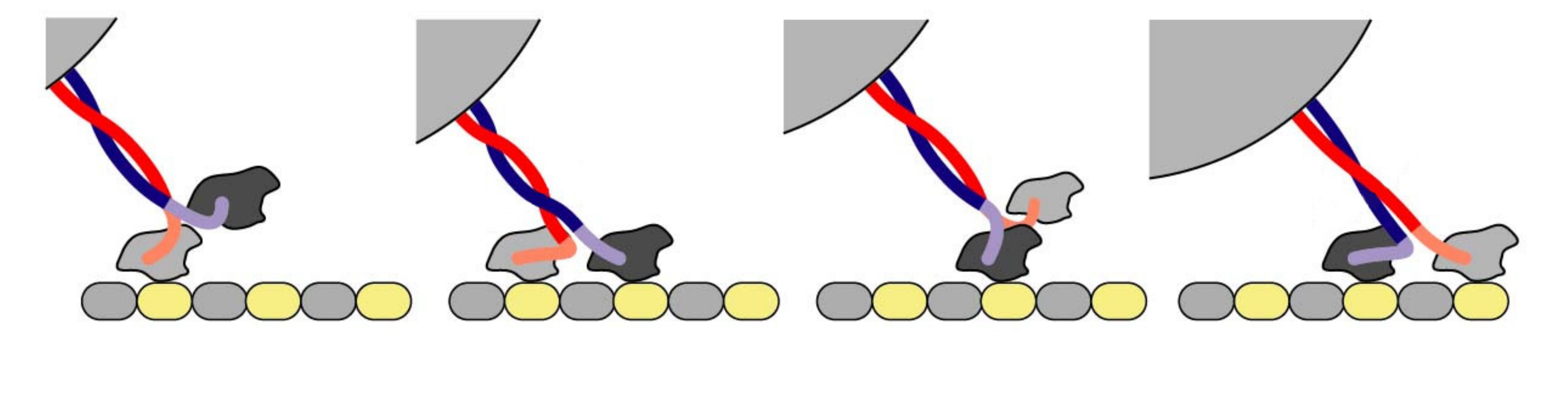}
	\caption{
		A molecular motor making a step
		\label{fig:SteppingMotor}
	}
\end{figure}

In \cite{KunVer08} a stochastic algorithm simulating the dynamics of multiple interacting motors was employed to obtain useful insights about their behavior via Monte Carlo simulations.
We ``borrow'' the rules of the algorithm given by \cite{KunVer08} and introduce a rigorous mathematical formalism in order to derive a Markov model and investigate how multiple motors function together.
However, in contrast to \cite{KunVer08}, our approach will have the advantage of relying on a probabilistic analysis that is exact and is not based on Monte Carlo simulations.\\
Every motor is modeled as a particle moving on a line by discrete steps. Its position can assume values that are equally spaced by the unit step size $d_{s}$.
Multiple motors are allowed to bind to the same location.
Every motor linkage behaves as a spring with elastic constant $K$ and rest length $l_0$ with no compressional rigidity: when compressed it buckles down with no resistance.
It is assumed that the motors are bound to the same cargo particle. The cargo is subject to the forces applied by the motors and to a constant load $F_{load}$ that is assumed positive when it opposes the motor movement.\\
In \cite{KunVer08}, two different models are provided for the cargo motion. In a simplified model the cargo moves deterministically reaching its equilibrium position immediately. In a more realistic model the cargo is subject to a Brownian force that models the thermal fluctuations of the environment.\\
In this paper we focus on the first scenario only, even though our results can be extended to treat the noisy case, as well.
The motors move by single steps of length $d_{s}$ in a stochastic manner. 
Backward steps are not allowed. 
Given two non-overlapping time intervals, the probabilities that a motor takes a step in each of them are independent.
This leads to a Poisson distribution of the time after which every next step occurs.
For each motor, the stepping rate of the Poisson distribution depends on the reaction force $F$ it is subject to from the cargo. $F$ is assumed positive when it opposes the motor movement.
A motor subject to a force $F$ will have a stepping rate given by $\lambda^{(step)}(F)$.
Estimates for $\lambda^{(step)}(F)$ can be obtained experimentally by single motor experiments and from the knowledge of the chemical process behind the ATP hydrolyzation.\\
Given the forces acting on each motor, they step independently from each other.
The motors have a ``stalling force'' $F_s$ such that, for $F\geq F_s$, we have $\lambda_{step}(F)=0$. The stalling force can be determined via experiments, as well.
The force applied to a motor also influence the probability rate $\lambda^{det}(F)$ at which it can detach form the microtubule. Every unengaged motor can also reattach to the microtubule at a rate $\lambda^{att}$, but only at locations that are distant not more than $l_0$ from the cargo position. In other words, a motor can not attach to the microtubule at a location where its linkage is stretched.
Finally, the model assumes that at any time there can not be more than $\overline m$ motors engaged on the microtubule.

\subsection{Mathematical formalism and Notation}
We model the microtubule locations as a bi-infinite sequence $\{a_k\}_{k\in \Z}$, $a_k\in\R$, where each $a_k$ represents the position of a binding site at which a motor can be located.
Since the locations are equally spaced by the distance $d_{s}$, we provide the following definition
\begin{define}
	We define a {\it location sequence} as $A:=\{a_k\}_{k\in \Z}$, where
	\begin{align}
		a_k=x^{(0)}+kd_{s}
	\end{align}
	and $x^{(0)}\in\R$.
\end{define}
Note that $x^{(0)}\in\R$ is determined by the arbitrary choice of the reference point on the microtubule.\\
The arrangement of the motors on the microtubule is described by a bi-infinite sequence of natural numbers $\{z_k\}_{k\in\Z}$, $z_k\in\N$, where $z_k$ represents the number of motors bound at the position $a_k$.
Equivalently, the arrangement of motors on the microtubule can be described by a finite ordered sequence $\{x_i\}_{i=1}^{m}$ representing the position of each of the $m$ motors engaged on the microtubule.
\begin{define}
	Assume a location sequence $A=\{a_k\}_{k\in \Z}$.
	We refer to a {\it configuration} as any bi-infinite sequence $Z:=\{z_k\}_{k\in\Z}$ of natural numbers such that
	\begin{align}
		m:=\sum_{k\in\Z}z_k<\infty.
	\end{align}
	The natural number $m$ is the {\it number of engaged agents} of the configuration $Z$.\\
	We define the {\it position sequence} of the configuration $Z$ with respect to the location sequence $A$ as the unique non-decreasing finite sequence $X:=\{x_i\}_{i=1}^{m}$ such that
	\begin{align}
		z_k=|\{i\in \N \text{ such that } x_i\in X \text{ and } x_i = a_k \}|,
	\end{align}
	where $|\cdot |$ denotes the cardinality of a set. We write $X=\phi_{A}(Z)$.
\end{define}
Each representation (configuration $Z$ or position sequence $X$) has notation advantages and drawbacks. For example, given a configuration, transitions like steps, attachments and detachments will be described by adding a proper ``transition sequence''. Conversely, to determine the forces the cargo particle is subject to the representation it terms of a position sequence will be easier to handle.
\begin{define}
	We define $\mathcal{Z}_{m}$ as the set of all the possible configurations with $m$ engaged agents and $\mathcal{X}_{m}$ as the set of all possible position sequences with $m$ engaged agents. We also define
	\begin{align}
		& \mathcal{Z}:= \bigcup_{m=1}^{+\infty} \mathcal{Z}_{m}
		\qquad
		& \mathcal{X}:= \bigcup_{m=1}^{+\infty} \mathcal{X}_{m}.
	\end{align}
\end{define}
\begin{lem}
	The set $\mathcal{Z}$ is countable.
\end{lem}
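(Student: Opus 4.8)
The plan is to reduce the statement to the standard fact that a countable union of countable sets is countable, together with an explicit encoding of each configuration by a finite tuple of integers. First I would note that, directly from the definition, $\mathcal{Z}=\bigcup_{m=1}^{\infty}\mathcal{Z}_m$; hence it suffices to prove that every $\mathcal{Z}_m$ is countable, and the claim then follows since a countable union of countable sets is countable.

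To show that $\mathcal{Z}_m$ is countable I would use the position-sequence map $\phi_A$ associated with any fixed location sequence $A$. For $Z\in\mathcal{Z}_m$ the sequence $\phi_A(Z)=\{x_i\}_{i=1}^{m}$ is non-decreasing and each entry has the form $x_i=x^{(0)}+k_i d_{s}$ for a unique $k_i\in\Z$, so $\phi_A(Z)$ is encoded by the non-decreasing integer tuple $(k_1\le\cdots\le k_m)\in\Z^m$. The assignment $Z\mapsto(k_1,\dots,k_m)$ is injective: $\phi_A$ is injective because a configuration is recovered from its position sequence via $z_k=|\{i:x_i=a_k\}|$, and the passage from the tuple back to $\phi_A(Z)$ is injective since $d_{s}\neq 0$. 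Thus $\mathcal{Z}_m$ injects into $\Z^m$, which is countable as a finite Cartesian product of the countable set $\Z$; hence $\mathcal{Z}_m$ is countable.

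An alternative I might present instead, avoiding $\phi_A$ altogether, is to identify each $Z\in\mathcal{Z}$ with its (necessarily finite) support $S(Z):=\{k\in\Z:\ z_k>0\}$: for a fixed finite $S\subseteq\Z$ the configurations supported in $S$ are in bijection with $\N^{S}$, which is countable, and the family of finite subsets of $\Z$ is itself countable, so $\mathcal{Z}$ is once again a countable union of countable sets.

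The argument is elementary, so I do not anticipate a genuine obstacle. The only points I would be careful about are (i) checking that the chosen encoding is truly injective — this is where $d_{s}\neq 0$ and the non-decreasing ordering of the position sequence (which removes the labelling ambiguity among the motors) enter — and (ii) observing that, since each $\mathcal{Z}_m$ comes equipped with an explicit enumeration through its embedding into $\Z^m$, the concluding ``countable union of countable sets'' step requires no appeal to the axiom of choice.
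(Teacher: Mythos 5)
The paper gives no proof of this lemma (it is explicitly left to the reader), so there is nothing to compare against; judged on its own, your argument is correct and complete. The decomposition $\mathcal{Z}=\bigcup_{m\geq 1}\mathcal{Z}_m$, the injection of $\mathcal{Z}_m$ into $\Z^m$ via the non-decreasing integer tuple $(k_1,\dots,k_m)$ extracted from the position sequence (using $d_s\neq 0$ and the injectivity of $\phi_A$, which the paper's subsequent lemma asserts is in fact a bijection onto $\mathcal{X}$), and the concluding countable-union step are all sound; your remark that the explicit enumerations render any appeal to choice unnecessary is a nice touch, as is the alternative encoding by finite support, which avoids fixing a location sequence altogether. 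Either route is exactly the kind of routine argument the authors intended the reader to supply.
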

\begin{proof}
	The proof is left to the reader.
\end{proof}
\begin{lem}
	For any $Z\in\mathcal{Z}$, let $X=\phi_{A}(Z)\in\mathcal{X}$ be the position sequence of the configuration $Z$ with respect to the location sequence $A$.
	The function $\phi_{A}:\mathcal{Z}\rightarrow \mathcal{X}$ is a one-to-one correspondence. Moreover, $\phi_{A}(\mathcal{Z}_m)=\mathcal{X}_m$ for any $m\in\N$.
\end{lem}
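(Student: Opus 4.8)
The plan is to establish bijectivity by exhibiting an explicit two-sided inverse $\psi_{A}:\mathcal{X}\to\mathcal{Z}$, and then to read off the graded statement $\phi_{A}(\mathcal{Z}_m)=\mathcal{X}_m$ from the construction. Given a position sequence $X=\{x_i\}_{i=1}^{m}\in\mathcal{X}_m$, that is, a non-decreasing finite sequence all of whose entries lie in $A$, I would define $\psi_{A}(X):=\{z_k\}_{k\in\Z}$ by $z_k:=|\{i : x_i=a_k\}|$. The first thing to check is that this is well defined: since $d_{s}\neq 0$ the map $k\mapsto a_k$ is injective, so each $x_i$ equals $a_k$ for exactly one index $k$, the counts $z_k$ are unambiguous, only finitely many of them are nonzero, and $\sum_{k\in\Z}z_k=m<\infty$; hence $\psi_{A}(X)\in\mathcal{Z}_m$.

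Next I would verify $\psi_{A}\circ\phi_{A}=\mathrm{id}_{\mathcal{Z}}$. Fix $Z=\{z_k\}\in\mathcal{Z}$ and put $X=\phi_{A}(Z)$. By the defining identity of the position sequence one has $z_k=|\{i : x_i=a_k\}|$ for every $k$, and the right-hand side is precisely the $k$-th entry of $\psi_{A}(X)$; therefore $\psi_{A}(\phi_{A}(Z))=Z$. For the other composition, fix $X=\{x_i\}_{i=1}^{m}\in\mathcal{X}_m$ and set $Z=\psi_{A}(X)$. By definition $\phi_{A}(Z)$ is the \emph{unique} non-decreasing length-$m$ sequence $X'$ satisfying $z_k=|\{i : x'_i=a_k\}|$ for all $k$. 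But $X$ itself is non-decreasing, has length $\sum_k z_k=m$, and satisfies this very identity by the construction of $Z$; by the uniqueness clause $X'=X$, so $\phi_{A}(\psi_{A}(X))=X$. Possessing a two-sided inverse, $\phi_{A}$ is a one-to-one correspondence.

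Finally, the graded statement. The construction shows that if $Z\in\mathcal{Z}_m$ then the number of engaged agents $m=\sum_k z_k$ equals the length of $\phi_{A}(Z)$, so $\phi_{A}(\mathcal{Z}_m)\subseteq\mathcal{X}_m$; symmetrically $\psi_{A}(\mathcal{X}_m)\subseteq\mathcal{Z}_m$, as already noted. Since $\psi_{A}=\phi_{A}^{-1}$, applying $\phi_{A}$ to the latter inclusion gives $\mathcal{X}_m\subseteq\phi_{A}(\mathcal{Z}_m)$, hence equality.

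I do not expect a genuine obstacle here; the only points demanding care are bookkeeping ones: confirming the $a_k$ are distinct so that ``the number of motors at $a_k$'' is meaningful, and recognizing that the uniqueness already built into the definition of $\phi_{A}$ is exactly what makes $\psi_{A}$ a right inverse, so surjectivity needs no separate argument. If one instead reads $\mathcal{X}_m$ as defined to be literally the image $\phi_{A}(\mathcal{Z}_m)$, the last paragraph becomes vacuous and only injectivity, via $\psi_{A}\circ\phi_{A}=\mathrm{id}$, must be proved.
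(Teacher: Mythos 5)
Your proof is correct. The paper itself gives no argument here (the proof is ``left to the reader''), and what you write is exactly the canonical argument one would supply: the explicit inverse $\psi_A$ counting occupancies, well-definedness from the injectivity of $k\mapsto a_k$ (since $d_s>0$), the two compositions checked via the defining identity and the uniqueness clause, and the graded statement read off from preservation of $m=\sum_k z_k$. Your closing remark is also apt: as the paper defines $\mathcal{X}_m$ only as ``the set of all possible position sequences with $m$ engaged agents,'' i.e.\ effectively the image of $\mathcal{Z}_m$, the surjectivity half is vacuous and only the injectivity argument via $\psi_A\circ\phi_A=\mathrm{id}$ carries content.
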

\begin{proof}
	The proof is left to the reader.
\end{proof}

\begin{define}
	For any $\alpha\in\Z$ and $Z=\{z_k\}_{k\in\Z}\in\mathcal{Z}$, we define the shift operator $\rho^{\alpha}:\mathcal{Z}\rightarrow\mathcal{Z}$
	\begin{align}
		\rho^{\alpha}(Z):=\{z_{k+\alpha}\}_{k\in\Z}.
	\end{align}
\end{define}

We assume that the motors move towards larger values of their positions.
With this assumption, we introduce the following definition.
\begin{define}
	Consider $Z=\{z_k\}_{k\in\Z}\in\mathcal{Z}$ and let $X=\phi_A(Z)=(x_1,...,x_m)$ be its position sequence with respect to the location $A$.
	We say that $x_m$ is the {\it vanguard agent} position while $x_1$ is the {\it rearguard agent} position of the configuration $Z$ (and of the position sequence $X$).
	We define {\it spread} of $Z$ (or spread of $X$ abusing the nomenclature) as
	\begin{align}
		s(Z):=x_{m}-x_1
	\end{align}
	and the dimension of $Z$  as
	\begin{align}
		h(Z):=1+\sup\{k\in\Z| z_k\neq 0\}-\inf\{k\in\Z|z_k\neq 0\}.
	\end{align}
	If $m=0$ we define $s(Z):=0$ and $h(Z):=1$.
\end{define}

In the model, only certain transitions are allowed, representing  a forward step, a motor detachment from the microtubule and a motor attachment to the microtubule.
\begin{define}
	Define the bi-infinite sequences
	\begin{align}
		& R^{(s)}_k:=\delta_{k+1}-\delta_{k}\\
		& R^{(d)}_k:=-\delta_{k}\\
		& R^{(a)}_k:=\delta_{k}
	\end{align}
	where $\delta_{k}$ represents the bi-infinite Kronecker delta sequence that is equal to $1$ at $k$ and zero anywhere else.
\end{define}

\begin{define}
	Consider a time-continuous Markov Process defined on the state space $\mathcal{Z}$, with time variable $t\in\R$
	and with transition rate from the configuration $Z_1$ to $Z_2$ given by $\lambda(Z_2,Z_1)$, meaning that, given that the system is in the configuration $Z_1$ at time, the probability of being in $Z_2$ at time $t+\Delta t$ is equal to $\lambda(Z_2,Z_1)\Delta t$ (for small $\Delta t$).\\
	Assume that, for any $Z_2, Z_1$ such that $\lambda(Z_2,Z_1)>0$,
	there exists $k\in\Z$ such that one of the following conditions is met
	\begin{itemize}
		\item $Z_2 = Z_1+R^{(s)}_k$
		\item $Z_2 = Z_1+R^{(d)}_k$
		\item $Z_2 = Z_1+R^{(a)}_k$.
	\end{itemize}
	We call the transitions sequences $R^{(s)}_k$, $R^{(d)}_k$ and $R^{(a)}_k$ respectively {\it forward step} from location $k$, {\it detachment} from location $k$ and {\it attachment} to location $k$.
	We call such a process an {\it Ensemble of Molecular Motors} (EMM).
\end{define}
The cargo transported by the group of molecular motors is usually subject to a constant force $F_{load}$. In this simplified article it is assumed that, after any transition, the cargo reaches its equilibrium position instantaneously. We provide the following definition.
\begin{define}
	Assume that a motor in position $x$ is subject to a force equal to $\chi(x-x^{(c)})$ given that the cargo is in position $x^{(c)}\in\R$.\\
	Given a position sequence $X=\{x_1,...,x_m\}$ (or equivalently its configuration $Z=\phi_A^{-1}(X)$), an equilibrium position of the cargo $x^{(c)}$ is any real number satisfying the equilibrium condition
	\begin{align}\label{eq:equilibrium}
		F_{load}=\sum_{i=1}^{m} \chi(x_i-x^{(c)}).
	\end{align}
\end{define}

\begin{define}
	We refer to any realization $Z(t)$ of a EMM as a configuration path.
	In an obvious way, any configuration path $Z(t)$ can be associate to a {\it position path} $X(t)=\phi_A(Z(t))=(x_1(t),...,x_{m(t)}(t))$. For any $t$, we also have the number of engaged agents $m(t)$, the vanguard agent with position $x_{m(t)}(t)$, the rearguard agent $x_{1}(t)$ and, under the assumption that it is unique, the cargo equilibrium position $x^{(c)}(t)$.
\end{define}

In \cite{KunVer08} a set of specific rules are defined to create a stochastic simulation algorithm for the motion of molecular motors. The rules are the following.
The cargo has $\overline{m}$ attached motors that can attach and detach from the microtubule. Their linkages behave as elastic springs of elastic constant $K$ and rest length $l_0$ that buckle down with no resistance when compressed.
It is postulated that motors engaged on the microtubule have always a chance of detaching depending on the force they are subject to. On the other hand, any disengaged motor can attach again, but only at locations where its linkage is not subject to a stretch (that is at distance not larger than $l_0$ from the cargo). The probability of taking a step depends on the force acting on the specific motor and there exists a ``stalling force'' $F_s>0$ making the step not possible. Finally, the behavior of the system is invariant to translations along the microtubule.
These rules are formalized in the following way
\begin{define}
	Given a EMM, we say that it satisfies the Kunwar-Veshinin-Xu-Gross (KVXG) rules with parameters $(\overline{m}, K, l_0, F_s)$, with $\overline{m}\in\N$ and $K, l_0, F_s$ real positive, if it meets the following properties.
	Assume a location sequence $A=\{a_k\}_{k\in Z}$; for any $Z_1\neq Z_2 \in \mathcal{Z}$, with position sequences $X_1$, $X_2$ and engaged agents $m_1$, $m_2$ respectively, we have
	\begin{itemize}
		\item	irreversible cargo loss with no engaged motors
		\begin{align}
			m_1=0 \text{ implies } \lambda(Z_2,Z_1)=0
		\end{align}
		\item	at most $\overline{m}$ engaged motors
		\begin{align}
			m_2>\overline{m} \text{ implies } \lambda(Z_2,Z_1)=0
		\end{align}
		\item	spatial invariance (for any $\alpha\in\Z$)
		\begin{align}
			\lambda(Z_2,Z_1)=\lambda(\rho^{\alpha}Z_2,\rho^{\alpha}Z_1)
		\end{align}
		\item	elastic stretch with constant $K$ and dead zone $l_0$
		\begin{align}\label{eq:elastic dead zone}
			\chi(l):=
			\left\{\begin{array}{ll}
				K(l-l_0)	& \text{if } l>l_0\\
				0		& \text{if } |l|\leq l_0\\
				K(l+l_0)	& \text{if } l<-l_0.
			\end{array}\right.
		\end{align}
		\item	attachment with no stretch
		\begin{align}
			& \lambda(Z_2,Z_1)>0 \text{ and } Z_2-Z_1=R^{(a)}_k \nonumber\\
			& \text{ implies } |a_k-x^{(c)}|\leq l_0
		\end{align}
		\item	stalling force $F_s$
		\begin{align}
			\chi(a_k-x^{(c)})\geq F_s \text{ implies } \lambda(Z_1+R^{(s)}_k,Z_1)=0.
		\end{align}
	\end{itemize}
\end{define}

Given an EMM with a transition rate function $\lambda:\mathcal{Z}\rightarrow \R^{+}$ meeting the KVXG rules, define $P_Z(Z,t|\overline{Z}, t_0)$ as the probability of being in the configuration $Z=\{z_k\}_{k\in\Z}$ at time $t\geq t_0$ conditioned to $Z(t_0)=\overline{Z}$.

Then, given an initial time $t_0$ and an initial state $\overline{Z}$, for $t\geq t_0$, $P_{Z}(Z,t|\overline{Z},t_0)$ satisfies the Master Equation
\begin{align}\label{eq:infinite master equation}
	& \frac{\partial}{\partial t}P_{Z}(Z,t|\overline{Z},t_0)=
		-P_{Z}(Z,t|\overline{Z},t_0)
			\sum_{Z'\in\mathcal{Z}}\lambda(Z',Z)\\
		&\qquad +\sum_{Z'\in\mathcal{Z}}\lambda(Z,Z')
			P_{Z}(Z',t|\overline{Z},t_0),
\end{align}
that represents the conservation law of the probability measure.

\section{Finite dimensional projection and Master Equation analysis}
The study of the dynamics of the motors (represented either in the configuration space or in the position sequence space) is not easy to handle since the entities involved are not stationary and the state-space is infinite-dimensional.
The following theoretical result is helpful for transforming the problem into a finite dimensional one.
\begin{theo}\label{thm:bounded configuration}
	Consider an EMM satisfying the KVXG rules with parameters $(\overline{m}, K, l_0, F_s)$. For any configuration path $Z(t)$ let $s(Z(t))$ be its associated spread for any $t$.
	Define
	\begin{align}
		s^{(max)}:=\max\left\{\frac{\overline{m} F_s-F_{load}}{K}+d_{s},\frac{F_{load}}{K}\right\}+2l_0.
	\end{align}
	For any $S\geq s^{(max)}$ and for any $t\geq t_0$, we have that if $s(\overline{Z})\leq S$
	\begin{align}
		Pr\{s(Z(t))\leq S~|~Z(t_0)=\overline{Z}\}=1.
	\end{align}
	where $Pr\{E|C\}$ is the conditional probability of an event $E$ given the condition $C$.
\end{theo}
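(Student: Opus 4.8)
Here is a plan for the proof.

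The plan is to establish the \emph{strong invariance} of the set $\mathcal{Z}_{S}:=\{Z\in\mathcal{Z}:s(Z)\le S\}$: no transition of positive rate sends a configuration of $\mathcal{Z}_{S}$ to one outside $\mathcal{Z}_{S}$. Granting this, the theorem is a soft consequence. A realization of an EMM is a pure‑jump process whose jumps are among the transitions $R^{(s)}_{k}$, $R^{(d)}_{k}$, $R^{(a)}_{k}$; from any $Z\in\mathcal{Z}_{S}$ with at most $\overline{m}$ engaged motors only finitely many of these have positive rate (at most $\overline{m}$ steps, at most $\overline{m}$ detachments, and attachments only at those $k$ for which $|a_{k}-x^{(c)}|\le l_{0}$, $x^{(c)}$ being the equilibrium cargo position, of which there are at most $\lfloor 2l_{0}/d_{s}\rfloor+1$), and the forces on the motors --- hence all the rates --- are bounded on $\mathcal{Z}_{S}$ (again by the equilibrium identity together with $s(Z)\le S$), so the holding times are stochastically bounded below by an exponential of a fixed rate; the process is therefore non‑explosive and, started from $s(\overline{Z})\le S$, never leaves $\mathcal{Z}_{S}$, which is exactly $Pr\{s(Z(t))\le S\mid Z(t_{0})=\overline{Z}\}=1$ for $t\ge t_{0}$. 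So it suffices to fix $Z_{1}\in\mathcal{Z}_{S}$ with $m_{1}\ge 1$ engaged motors (if $m_{1}=0$ no transition has positive rate and the claim is trivial), position sequence $x_{1}\le\dots\le x_{m_{1}}$, an equilibrium cargo position $x^{(c)}$ (which exists for $m_{1}\ge1$ by the intermediate value theorem), and to check $s(Z_{2})\le S$ for each admissible successor $Z_{2}$. Write $u_{i}:=x_{i}-x^{(c)}$. The only properties of $\chi$ I will use, all read off \eqref{eq:elastic dead zone}, are that $\chi$ is nondecreasing, $\chi(v)\ge 0\iff v\ge -l_{0}$, $\chi(v)\le 0\iff v\le l_{0}$ and $\chi(v)<F_{s}\iff v<l_{0}+F_{s}/K$, together with the equilibrium identity $\sum_{i=1}^{m_{1}}\chi(u_{i})=F_{load}$ and $F_{load}\ge 0$.

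A detachment deletes a motor, so the vanguard cannot advance and the rearguard cannot retreat, whence $s(Z_{2})\le s(Z_{1})\le S$. For an attachment at a site $a_{k}$ the attachment‑with‑no‑stretch rule gives $|a_{k}-x^{(c)}|\le l_{0}$. If $x_{1}\le a_{k}\le x_{m_{1}}$ the spread is unchanged. If $a_{k}>x_{m_{1}}$ then $u_{m_{1}}<l_{0}$, so $\chi(u_{i})\le\chi(u_{m_{1}})\le 0$ for every $i$ and hence $F_{load}=\sum_{i}\chi(u_{i})\le 0$; with $F_{load}\ge 0$ this forces $\chi(u_{i})=0$ for all $i$, so $u_{1}\ge -l_{0}$ and $s(Z_{2})=a_{k}-x_{1}\le 2l_{0}\le s^{(max)}$. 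If $a_{k}<x_{1}$ then $u_{1}>-l_{0}$, so $\chi(u_{i})\ge 0$ for every $i$ and therefore $\chi(u_{m_{1}})\le\sum_{i}\chi(u_{i})=F_{load}$, giving $u_{m_{1}}\le l_{0}+F_{load}/K$; hence $s(Z_{2})=x_{m_{1}}-a_{k}\le u_{m_{1}}+l_{0}\le 2l_{0}+F_{load}/K\le s^{(max)}$. This settles detachments and attachments.

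The decisive case is a forward step. A step of a motor that does not occupy the vanguard location leaves the spread unchanged or decreases it, so suppose the stepping motor sits at the vanguard location; then $s(Z_{2})=s(Z_{1})+d_{s}$ (and $s(Z_{2})=0$ when $m_{1}=1$). Because the transition has positive rate, the stalling rule forces $\chi(u_{m_{1}})<F_{s}$, so $u_{m_{1}}<l_{0}+F_{s}/K$ and $\chi(u_{i})\le\chi(u_{m_{1}})<F_{s}$ for all $i$; feeding this into the equilibrium identity gives $\chi(u_{1})=F_{load}-\sum_{i\ge 2}\chi(u_{i})>F_{load}-(m_{1}-1)F_{s}\ge F_{load}-(\overline{m}-1)F_{s}=:L$. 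A short case analysis on the sign of $\chi(u_{1})$ then converts this into a bound on $-u_{1}=x^{(c)}-x_{1}$: for instance, if $\chi(u_{1})<0$ (which forces $L<0$) then $u_{1}>-l_{0}+L/K$, so $-u_{1}<l_{0}+\bigl((\overline{m}-1)F_{s}-F_{load}\bigr)/K$; if $\chi(u_{1})\ge 0$ then $-u_{1}\le l_{0}$, and even $-u_{1}<-l_{0}-L/K$ when $L>0$. Combining with $u_{m_{1}}<l_{0}+F_{s}/K$ and using the identities $F_{s}+\bigl((\overline{m}-1)F_{s}-F_{load}\bigr)=F_{s}-L=\overline{m}F_{s}-F_{load}$, one finds in every case $s(Z_{1})=u_{m_{1}}-u_{1}<2l_{0}+(\overline{m}F_{s}-F_{load})/K$, hence $s(Z_{2})=s(Z_{1})+d_{s}<(\overline{m}F_{s}-F_{load})/K+d_{s}+2l_{0}\le s^{(max)}\le S$. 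This finishes the step case, and with it the strong invariance and the theorem. I expect this last estimate to be the crux: the real work is to extract from the purely local facts ``the vanguard is not stalled'' and ``at most $\overline{m}$ motors are engaged'' a uniform ceiling on how far \emph{behind} the cargo the rearguard may lag, and then to keep the two competing terms of $s^{(max)}$ straight --- the regimes $F_{load}\le(\overline{m}-1)F_{s}$ and $F_{load}>(\overline{m}-1)F_{s}$, and the sub‑case in which the rearguard in fact lies \emph{ahead} of the cargo. By contrast, the detachment and attachment cases and the non‑explosion argument are routine.
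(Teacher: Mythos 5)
Your proposal is correct and, in substance, coincides with the paper's proof: the heart of both arguments is the same three-case check (detachment, attachment, forward step of the vanguard) that no positive-rate transition can take the spread above $S$, using exactly the force-balance estimates that the paper isolates as Lemmas \ref{lem:decreasing step} and \ref{lem:decreasing attachment} (you state them as direct bounds on a non-stalled configuration, the paper as contrapositives). The only difference is the wrapper used to convert invariance of $\{Z : s(Z)\leq S\}$ into the probability-one statement --- you argue pathwise via non-explosion of the jump chain, while the paper derives $\frac{\partial}{\partial t}Pr\{s(Z(t))>S\}\leq 0$ from the master equation --- which does not change the mathematical content.
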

\begin{proof}
	See the appendix
\end{proof}
Theorem \ref{thm:bounded configuration} enables us to study many properties of an EMM using a formulation that has finite dimension and deals with stationary quantities.
To this aim, we introduce the following definitions.
\begin{define}
	Given an EMM satisfying the KVXG rules with parameters $(\overline{m}, K, l_0, F_s)$ subject to load $F_{load}$ and a location sequence $A$ with step size $d_s$, we define
	\begin{align}
		n:=\left\lceil
			\frac{s^{(max)}}{d_{s}}
		\right\rceil.
	\end{align}
	as the {\it regular dimension} of the ensemble.\\
	Given a configuration $Z=\{z_k\}_{k\in\Z}$, we also define its {\it ensemble representation} as the $n$-vector
	\begin{align}
		Q=(z_{k_1}, ... z_{k_1+n-1})^T
	\end{align}
	where $k_1:=\inf\{k\in\Z | z_k\neq 0\}$ and we write $Q:=\Pi^{(e)}(Z)$.
	If $k_1=-\infty$, $Q$ is $(0, 0, ..., 0)^T$.
\end{define}
The ensemble representation of $Z$ is the projection of the configuration on a suitable $n$-dimensional space.
Theorem \ref{thm:bounded configuration} guarantees that the ensemble representation $Q(t)$ of the configuration $Z(t)$ contains all the non-zero entries of $Z(t)$ with probability $1$ for all $t>t_0$ if the dimension of $Z(t_0)$ is not greater than $n$.
In this scenario, while $Q(t)$ does not contain the information about the absolute positions of the motors, it still contains all the information about all their relative positions, so it can be used as a tool to study the dynamics of their mutual interactions.\\
Assume that the starting configuration $Z(t_0)=\overline{Z}$ has dimension not exceeding the regular one $n$.
For $t>t_0$ the spread is not exceeding $n$ and the transition probabilities of $Z(t)$ are known. Despite the projection, it is possible to derive the transition probabilities of $Q(t)$. This allows to define the dynamics of a Markov system with a finite dimension.
The following properties hold.
\begin{lem}
	Consider an EMM with regular dimension $n$. Let $s(\overline{Z})\leq n$ and
	$P_{Q}(Q,t)$ be the probability that $\Pi^{(e)}(Z)$ is equal to $Q$ at time $t$.
	Then, for $Q\neq 0$ and for any $t\geq t_0$
	\begin{align}
		P_Q(Q,t|\overline{Z},t_0)=
		\sum_{\alpha\in\Z}P_Z\{\rho^{\alpha}Z,t|\overline{Z},t_0\}
	\end{align}
\end{lem}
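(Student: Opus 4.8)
The plan is to realize, for $Q\neq 0$, the event $\{\Pi^{(e)}(Z(t))=Q\}$ as a countable \emph{disjoint} union of events $\{Z(t)=Z'\}$ and then sum probabilities. Fix once and for all a representative configuration $Z\in\mathcal{Z}$ with $\Pi^{(e)}(Z)=Q$; since $Q\neq 0$, its first nonzero index $k_1:=\inf\{k\in\Z\,|\,z_k\neq 0\}$ is finite, and the content of the claim is that the fibre $(\Pi^{(e)})^{-1}(Q)$, restricted to the configurations the process can actually occupy, is precisely the shift orbit $\{\rho^{\alpha}Z\,|\,\alpha\in\Z\}$. Because $\mathcal{Z}$ is countable, no measure-theoretic subtlety arises: the conditional probability is countably additive over this orbit.

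The crux is this fibre identification, which splits into two inclusions. First, a direct computation from the definitions of $\rho^{\alpha}$ and $\Pi^{(e)}$ shows that for every $\alpha\in\Z$ the shift $\rho^{\alpha}Z$ has first nonzero index $k_1-\alpha$, satisfies $\Pi^{(e)}(\rho^{\alpha}Z)=Q$, and has the same dimension $h(\rho^{\alpha}Z)=h(Z)$; since the indices $k_1-\alpha$ are pairwise distinct, the events $\{Z(t)=\rho^{\alpha}Z\}$ are pairwise disjoint. Conversely, suppose $Z'=\{z'_k\}_{k\in\Z}$ satisfies $\Pi^{(e)}(Z')=Q$ \emph{and} $h(Z')\leq n$, and let $k_1'$ be its first nonzero index. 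Putting $\alpha:=k_1-k_1'$, matching the two length-$n$ ensemble windows gives $z'_{k_1'+j}=z_{k_1+j}$ for $j=0,\dots,n-1$, i.e. $z'_k=(\rho^{\alpha}Z)_k$ for $k\in\{k_1',\dots,k_1'+n-1\}$; for $k<k_1'$ both sides vanish by minimality of $k_1'$ and of $k_1$; and for $k\geq k_1'+n$ both sides vanish because $h(Z')\leq n$ and $h(Z)=h(\rho^{\alpha}Z)\leq n$ confine all nonzero entries of $Z'$, respectively of $\rho^{\alpha}Z$, to the $n$ indices $k_1',\dots,k_1'+n-1$. Hence $Z'=\rho^{\alpha}Z$, so the only configurations in $(\Pi^{(e)})^{-1}(Q)$ of dimension at most $n$ are the shifts of $Z$.

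It then remains only to discard the configurations of dimension larger than $n$, and this is exactly what Theorem~\ref{thm:bounded configuration} provides: as noted above, under the hypothesis $s(\overline{Z})\leq n$ the ensemble representation captures all nonzero entries of $Z(t)$ with conditional probability $1$ for every $t\geq t_0$, i.e. $h(Z(t))\leq n$ almost surely. Those excluded configurations therefore carry zero conditional probability, and combining this with the fibre identification and countable additivity,
\begin{align}
	P_Q(Q,t|\overline{Z},t_0)
	&=Pr\{\Pi^{(e)}(Z(t))=Q\,|\,Z(t_0)=\overline{Z}\}\\
	&=\sum_{\alpha\in\Z}Pr\{Z(t)=\rho^{\alpha}Z\,|\,Z(t_0)=\overline{Z}\}\\
	&=\sum_{\alpha\in\Z}P_Z(\rho^{\alpha}Z,t|\overline{Z},t_0).
\end{align}
The step I expect to require the most care is the fibre identification, and within it the bookkeeping linking the bound of Theorem~\ref{thm:bounded configuration} to the statement that an $n$-entry window captures all nonzero components of the reachable configurations (the interplay of $s^{(max)}$, the step size $d_s$, the ceiling in the definition of $n$, and the $+1$ relating spread to dimension deserves a careful check); everything else is routine manipulation of shift operators together with countable additivity on the countable state space $\mathcal{Z}$.
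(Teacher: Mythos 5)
The paper offers no proof of this lemma (it is ``left to the reader''), so there is nothing to compare against; your argument is the natural one the authors presumably intend, and it is essentially correct. You decompose the event $\{\Pi^{(e)}(Z(t))=Q\}$ into the pairwise-disjoint shift orbit $\{Z(t)=\rho^{\alpha}Z\}_{\alpha\in\Z}$ plus the residual set of configurations of dimension exceeding $n$, show the orbit exhausts the fibre $(\Pi^{(e)})^{-1}(Q)$ among configurations of dimension at most $n$, invoke Theorem~\ref{thm:bounded configuration} to give the residual set conditional probability zero, and conclude by countable additivity on the countable space $\mathcal{Z}$. The fibre identification is carried out correctly in both directions. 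The one point you flag as needing care is indeed a genuine (if minor) wrinkle in the paper's bookkeeping rather than in your reasoning: since $s(Z)=(h(Z)-1)d_s$, Theorem~\ref{thm:bounded configuration} applied with $S=s^{(max)}$ gives $h(Z(t))-1\leq s^{(max)}/d_s$ almost surely, which yields $h(Z(t))\leq n=\lceil s^{(max)}/d_s\rceil$ only when $s^{(max)}/d_s$ is not an integer; in the degenerate case where $s^{(max)}$ is an exact multiple of $d_s$ one obtains only $h(Z(t))\leq n+1$, so the window of width $n$ could in principle miss a nonzero entry. This is an off-by-one artifact of the paper's definition of the regular dimension (and of its conflation of the spread, a length, with $n$, a pure number, in the hypothesis $s(\overline{Z})\leq n$); it is repaired by taking the window width to be $1+\lfloor s^{(max)}/d_s\rfloor+1$ or, equivalently, reading the hypotheses in terms of $h$ rather than $s$. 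Modulo that definitional fix, your proof is complete.
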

\begin{proof}
	The proof is left to the reader.
\end{proof}

\begin{theo}
	Consider an EMM with regular dimension $n$ and let $s(t)$ be the spread at time $t$.
	Define 
	\begin{align}
		\lambda_{Q}(Q',Q):=\sum_{\stackrel{\Pi^{(e)}(Z')=Q'}{s(Z)\leq n}}\lambda(Z',Z).
	\end{align}
	For any $t\geq t_0$ and for any $\overline{Z}$ such that $s(\overline{Z})\leq n$, it holds that
	\begin{align}
		&\frac{\partial}{\partial t}P_{Q}(Q,t|\overline{Z},t_0)=
			-P_{Q}(Q,t|\overline{Z},t_0)\sum_{Q'\in\mathcal{Q}}
				\lambda(Q',Q)\\
		&\qquad+\sum_{Q'\in\mathcal{Q}}
			\lambda(Q,Q')P_Q(Q',t|\overline{Z},t_0)
	\end{align}
\end{theo}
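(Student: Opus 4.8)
The plan is to derive the evolution equation for $P_Q$ directly from the infinite Master Equation~\eqref{eq:infinite master equation} satisfied by $P_Z$, using the preceding lemma to express $P_Q$ as a shift-sum of $P_Z$ and then collapsing the resulting double sums by means of the spatial invariance in the KVXG rules together with Theorem~\ref{thm:bounded configuration}.

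First I would fix $Q\in\mathcal{Q}$ with $Q\neq 0$ and choose the representative configuration $Z$ of $Q$ whose first nonzero entry sits at index $0$; with this choice the set of configurations having ensemble representation $Q$ is exactly $\{\rho^{\alpha}Z\mid\alpha\in\Z\}$. The preceding lemma then gives
\begin{align}
  P_Q(Q,t|\overline{Z},t_0)=\sum_{\alpha\in\Z}P_Z(\rho^{\alpha}Z,t|\overline{Z},t_0),
\end{align}
and I would differentiate this identity termwise in $t$ and substitute~\eqref{eq:infinite master equation} into each summand, producing an ``outflow'' term and an ``inflow'' term to be treated separately.

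For the outflow term, a reindexing combined with spatial invariance gives $\sum_{Z'}\lambda(Z',\rho^{\alpha}Z)=\sum_{Z''}\lambda(Z'',Z)=:\Lambda(Z)$, independent of $\alpha$. Since $s(\overline{Z})\leq n$, Theorem~\ref{thm:bounded configuration} confines every configuration carrying positive probability --- hence every configuration reachable from it in one transition --- to spread at most $n$, so classifying the transitions leaving $Z$ according to the ensemble representation $Q'$ of their target is exhaustive and yields $\Lambda(Z)=\sum_{Q'\in\mathcal{Q}}\lambda_{Q}(Q',Q)$; the outflow term therefore equals $\bigl(\sum_{Q'}\lambda_{Q}(Q',Q)\bigr)\sum_{\alpha}P_Z(\rho^{\alpha}Z,t|\overline{Z},t_0)=\bigl(\sum_{Q'}\lambda_{Q}(Q',Q)\bigr)P_Q(Q,t|\overline{Z},t_0)$. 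For the inflow term $\sum_{\alpha}\sum_{Z'}\lambda(\rho^{\alpha}Z,Z')P_Z(Z',t|\overline{Z},t_0)$, all summands are nonnegative, so Tonelli's theorem lets me exchange the order of summation and group the $Z'$ by $Q'=\Pi^{(e)}(Z')$; writing such $Z'$ as $\rho^{\beta}Z'_0$ and using spatial invariance, $\sum_{\alpha}\lambda(\rho^{\alpha}Z,\rho^{\beta}Z'_0)=\sum_{\gamma}\lambda(\rho^{\gamma}Z,Z'_0)$ is independent of $\beta$, and since $\{\rho^{\gamma}Z\}_{\gamma\in\Z}$ runs exactly over the configurations with ensemble representation $Q$, this sum equals $\lambda_{Q}(Q,Q')$; summing over $\beta$ then rebuilds $P_Q(Q',t|\overline{Z},t_0)$, so the inflow term is $\sum_{Q'}\lambda_{Q}(Q,Q')P_Q(Q',t|\overline{Z},t_0)$. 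Combining the two pieces yields the stated equation, and the degenerate case $Q=0$ follows at once because the all-zero sequence is the unique configuration with no engaged agents.

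The steps I expect to be most delicate are the analytic ones behind ``differentiate termwise'' and ``exchange the order of summation.'' The inflow interchange is clean by nonnegativity, but justifying termwise differentiation of $\sum_{\alpha}P_Z(\rho^{\alpha}Z,t|\overline{Z},t_0)$ requires a uniform tail bound on the $\alpha$-series of derivatives, which I would obtain from the uniform boundedness of the exit rates (each state admits only finitely many transitions, with uniformly bounded rates) together with the fact that, since the motors step by finite increments, $P_Z(\rho^{\alpha}Z,t|\overline{Z},t_0)$ decays rapidly as $|\alpha|\to\infty$, uniformly on compact time intervals. The other point to watch is the bookkeeping of the shift compositions when invoking spatial invariance; here Theorem~\ref{thm:bounded configuration} is exactly what guarantees that no probability escapes the spread-$n$ window, so that each sum over $Q'\in\mathcal{Q}$ genuinely accounts for the full transition rate.
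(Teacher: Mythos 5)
Your proposal is correct and follows essentially the same route as the paper's proof: expressing $P_Q$ as the shift-sum $\sum_{\alpha}P_Z(\rho^{\alpha}Z,t)$ via the preceding lemma, substituting the infinite Master Equation, and using spatial invariance to reindex and collapse the double sums into $\lambda_Q$ and $P_Q$, with the $Q=0$ case treated separately. Your added remarks on justifying termwise differentiation and the Tonelli interchange are sensible refinements of steps the paper performs without comment.
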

\begin{proof}
	In order to simplify the notation we omit the initial condition: $P_Q(Q,t)=P_Q(Q,t|\overline{Z},t_0)$ and $P_Z(Z,t)=P_Z(Z,t|\overline{Z},t_0)$.
	We have that $Q=0$ if and only if $Z=0$. In such a case, $\lambda(Z',Z)=0$ for any $Z'$ and $\lambda(Z,Z')=0$ for any $Z'$ such that $\Pi^{(e)}(Z')\neq Q_{1}:=(1, 0, ... ,0)$.
	Observe that
	\begin{itemize}
		\item $\lambda_{Q}(Q',0)=0$ for any $Q'$
		\item $\lambda_{Q}(0,Q')=0$ for any $Q'\neq Q_{1}$
		\item $P_Z(Z',t)=0$ if $s(Z')>n$.
	\end{itemize}
	Then, for $Q=0$ we obtain 
	\begin{align*}
		&\frac{\partial}{\partial t}P_{Q}(0,t)=
			\frac{\partial}{\partial t}P_{Z}(0,t)=\\
		&=\sum_{\stackrel{\Pi^{(e)}(Z')=Q_1}{s(Z)\leq n}}
			\lambda(0,Z')P_{Z}(Z',t)=\lambda_{Q}(0,Q_1)P_{Q}(0,Q_1).
	\end{align*}
	For $Q\neq 0$, we have
	\begin{align*}
		&\frac{\partial}{\partial t}P_{Q}(Q,t)=
			\frac{\partial}{\partial t}P_{Q}(\Pi^{(e)}(Z),t)
		=\sum_{\alpha\in\Z}\frac{\partial}{\partial t}P_{Z}(\rho^{\alpha}Z,t)=\\
		&\qquad -\sum_{\alpha\in\Z}
			\sum_{Z'\in\mathcal{Z}}
				\lambda(Z',\rho^{\alpha}Z)P_{Z}(\rho^{\alpha}Z,t)\\
		&\qquad +\sum_{\alpha\in\Z}
			\sum_{Z'\in\mathcal{Z}}
				\lambda(\rho^{\alpha}Z,Z')P_{Z}(Z',t)=\\
		&=-\sum_{\alpha\in\Z}
			\sum_{Z'\in\mathcal{Z}}
				\lambda(\rho^{\alpha}Z',\rho^{\alpha}Z)P_{Z}(\rho^{\alpha}Z,t)\\
		&\qquad+\sum_{\alpha\in\Z}
			\sum_{Z'\in\mathcal{Z}}
				\lambda(\rho^{\alpha}Z,\rho^{\alpha}Z')P_{Z}
				(\rho^{\alpha}Z',t)\\
		&=-\sum_{Z'\in\mathcal{Z}}
			\lambda(Z',Z)
				\sum_{\alpha\in\Z}P_{Z}(\rho^{\alpha}Z,t)\\
		&\qquad+\sum_{Z'\in\mathcal{Z}}
			\lambda(Z,Z')
				\sum_{\alpha\in\Z}P_{Z}(\rho^{\alpha}Z',t)=\\
		&=-P_{Q}(Q,t)
			\sum_{Z'\in\mathcal{Z}}
				\lambda(Z',Z)\\
		&\qquad+\sum_{\stackrel{Z'\in\mathcal{Z}}{s(Z')\leq n}}
			\lambda(Z,Z')P_Q(\Pi^{(e)}(Z'))=\\
		&=-P_{Q}(Q,t)\sum_{Q'\in\mathcal{Q}}
				\lambda_{Q}(Q',Q)\\
		&\qquad+\sum_{Q'\in\mathcal{Q}}
			\lambda_{Q}(Q,Q')P_Q(Q')
	\end{align*}
\end{proof}

\begin{lem}
	Given an EMM satisfying the KVXG rules with parameters $(\overline{m}, K, l_0, F_s)$ with regular dimension $n$,  the number of possible ensemble representations is given by
	\begin{align}\label{eq:ensemble configuration number}
		N=1+\sum_{m=1}^{\overline m}\frac{(n+m-2)!}{(n-1)! (m-1)!}.
	\end{align}
\end{lem}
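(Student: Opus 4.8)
The plan is to identify the set of ensemble representations of the EMM with an explicit family of $n$-tuples of natural numbers and then to enumerate that family by a stars-and-bars count. By Theorem~\ref{thm:bounded configuration} together with the choice $n=\lceil s^{(max)}/d_{s}\rceil$, every configuration $Z$ that occurs along a path of the EMM has dimension $h(Z)\le n$; equivalently, writing $k_1:=\inf\{k\in\Z\mid z_k\neq 0\}$, all nonzero entries of $Z$ lie among $z_{k_1},\dots,z_{k_1+n-1}$, so its ensemble representation $\Pi^{(e)}(Z)$ loses no information. The first step is therefore to show that the set of possible ensemble representations equals
\begin{align*}
	\{0\}\ \cup\ \Big\{\,q=(q_1,\dots,q_n)^T\in\N^n \ \Big|\ q_1\ge 1,\ 1\le\textstyle\sum_{i=1}^{n}q_i\le\overline m\,\Big\}.
\end{align*}
The inclusion ``$\subseteq$'' is immediate: the zero vector is $\Pi^{(e)}$ of the empty configuration, and for $Z\neq 0$ the leading entry is $q_1=z_{k_1}\ge 1$ by the very definition of $k_1$, while $\sum_{i=1}^{n}q_i=\sum_{k\in\Z}z_k=m$ with $1\le m\le\overline m$ by the KVXG bound on the number of engaged motors.

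For the reverse inclusion I would exhibit a preimage. Given $q$ with $q_1\ge 1$ and $1\le m:=\sum_{i=1}^{n}q_i\le\overline m$, put $z_{i-1}:=q_i$ for $i=1,\dots,n$ and $z_k:=0$ otherwise. Then $Z$ has $m\le\overline m$ engaged agents; its nonzero entries sit in the window $\{0,\dots,n-1\}$, so $h(Z)\le n$ and $s(Z)\le(n-1)d_{s}<s^{(max)}$; and $k_1=0$ since $q_1\ge 1$, whence $\Pi^{(e)}(Z)=q$. Thus $Z$ is a legitimate configuration whose spread obeys the bound of Theorem~\ref{thm:bounded configuration}, the displayed equality holds, and since its right-hand side is a disjoint union it remains only to count the second set.

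For the count I would stratify by the common value $m=\sum_{i=1}^{n}q_i\in\{1,\dots,\overline m\}$. For fixed $m$, the shift $q_1\mapsto q_1-1$ is a bijection between $\{q\in\N^n\mid q_1\ge 1,\ \sum_i q_i=m\}$ and the nonnegative integer solutions of $q_1'+q_2+\dots+q_n=m-1$; by stars and bars the latter number $\binom{(m-1)+(n-1)}{n-1}=\binom{m+n-2}{n-1}=\frac{(n+m-2)!}{(n-1)!\,(m-1)!}$. Adding the single zero vector and summing over $m$ yields
\begin{align*}
	N=1+\sum_{m=1}^{\overline m}\frac{(n+m-2)!}{(n-1)!\,(m-1)!},
\end{align*}
which is \eqref{eq:ensemble configuration number}.

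There is no genuine difficulty here; the one point demanding care is the first step, namely the description of the set of ensemble representations. This requires using the definition of $k_1$ to see that the leading entry is always strictly positive, invoking Theorem~\ref{thm:bounded configuration} through $n=\lceil s^{(max)}/d_{s}\rceil$ to see that a window of length $n$ captures all the mass of every reachable configuration, and treating the empty configuration as a separate single case. Once that set is pinned down, the enumeration is the routine stars-and-bars computation above.
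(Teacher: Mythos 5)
Your proof is correct; the paper itself leaves this lemma to the reader, and your argument — identifying the nonzero ensemble representations with $n$-tuples of naturals having positive first entry and sum $m\in\{1,\dots,\overline m\}$, then counting each stratum by stars and bars as $\binom{n+m-2}{n-1}$ and adding the zero vector — is plainly the intended one. No gaps.
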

\begin{proof}
	The proof is left to the reader.
\end{proof}

Enumerate all the possible ensemble representations $Q_1,...,Q_N$ with $N$ given by (\ref{eq:ensemble configuration number}) and define $P_{i}(t)$ as the probability of having the ensemble representation $Q_i$ at time $t$ given the initial condition $\overline{Z}$ such that $s(\overline{Z})\leq n$.
By projecting Equation (\ref{eq:infinite master equation}) onto the set $\mathcal{Q}:=\{Q_1,...,Q_N\}$ we obtain 
\begin{align}\label{eq:finite master equation}
	\frac{d}{dt}P_{Q}(Q,t)=
		-\sum_{Q'\in\mathcal{Q}^{(+)}(Q)}\lambda_{Q}(Q',Q)P_{Q}(Q,t)\\
		+\sum_{Q'\in\mathcal{Q}^{(-)}(Q)}\lambda_{Q}(Q,Q')P_{Q}(Q',t),
\end{align}
The dynamics of the probabilities on the space of the ensemble representations is given by an equation of the form
\begin{align}\label{eq:master equation}
	\frac{d}{dt}P=A P
\end{align}
where $P:=(P_1,...,P_N)^T$, the off-diagonal element of $A$, $a_{ji}$ is the rate of the transition from $Q_i$ to $Q_j$ and the sum of all the rows of $A$ is zero.
In this way the dynamics of the infinite dimensional Markov system has been projected onto the dynamics of a finite dimensional one. In the section we show how this projection provides no loss of information when used to obtain certain quantities of interest such as the expected velocity of the cargo and the expected runlength.

\section{Determining the transition probabilities from single motor observations}
In this section we determine the transition rates $\lambda(Z',Z)$ for the infinite dimensional stochastic model.
The transition rates $\lambda_{Q}(Q',Q)$ necessary to define the finite dimensional one can be computed from the transition rates $\lambda(Z',Z)$.
Following \cite{KunVer08}, it is possible to derive the probability rates of simple events $\lambda(Z',Z)$ by considering the behavior of a single motor.
\subsection*{Rate for stepping transitions}
Every motor moves converting ATP into kinetic energy according to the chemical equation
\begin{equation}
	K+ATP \stackrel[k_{off}]{k_{on}}{\rightleftarrows}
	K~ATP \stackrel{k_{cat}}{\rightarrow}
	K+ADP+P_{i}+{\tt energy}.
\end{equation}
Following \cite{MeyHow95}, a Michaelis-Menten dynamics leads to a hydrolisis rate equal to
$k_{cat}[ATP] /([ATP]+k_m)$.
It is also assumed that the free head of the motor successfully binds to the microtubule location with probability rate (or efficiency) $\eps$.
Then, the transition rate $\lambda_{step}$ of stepping and the average velocity $V$ of a single motor are given by
\begin{align}
	& \lambda_{step}=\frac{k_{cat}[ATP]}{[ATP]+k_m}\eps\\
	& V=P_{step}d=\frac{k_{cat}[ATP]}{[ATP]+k_m}d\eps
\end{align}
where $d$ is the fixed size of the protein step.
The force $F$ that the cargo exerts on the motor is assumed positive when it opposes the motor motion. It is assumed that when it exceeds a certain force $F_{0}$ it causes the motor to stall.
Following \cite{KunVer08}, it is assumed to affect the motor dynamics by changing the probability $\eps$ of binding with the microtubule, following the relation
\begin{align}
	\eps(F)=
		\left\{\begin{array}{lr}
			1			&\text{if } F\leq 0\\
			1-\left(\frac{F}{F_{s}}\right)^2  & \text{if }0<F<F_{0}\\
			0	& \text{otherwise}
		       \end{array}
		\right.
\end{align}
where $F_{s}$ is the stalling force of the KVXG rules. 
In Figure~\ref{fig:efficiency}, it is represented how the efficiency $\eps(F)$ changes in different situations: when the motor is lagging behind the cargo, when it is at a distance where the linkage is not stretched and when it is actively pulling the cargo.\\
\begin{figure}
	\centering
	\includegraphics[width=0.6\columnwidth]{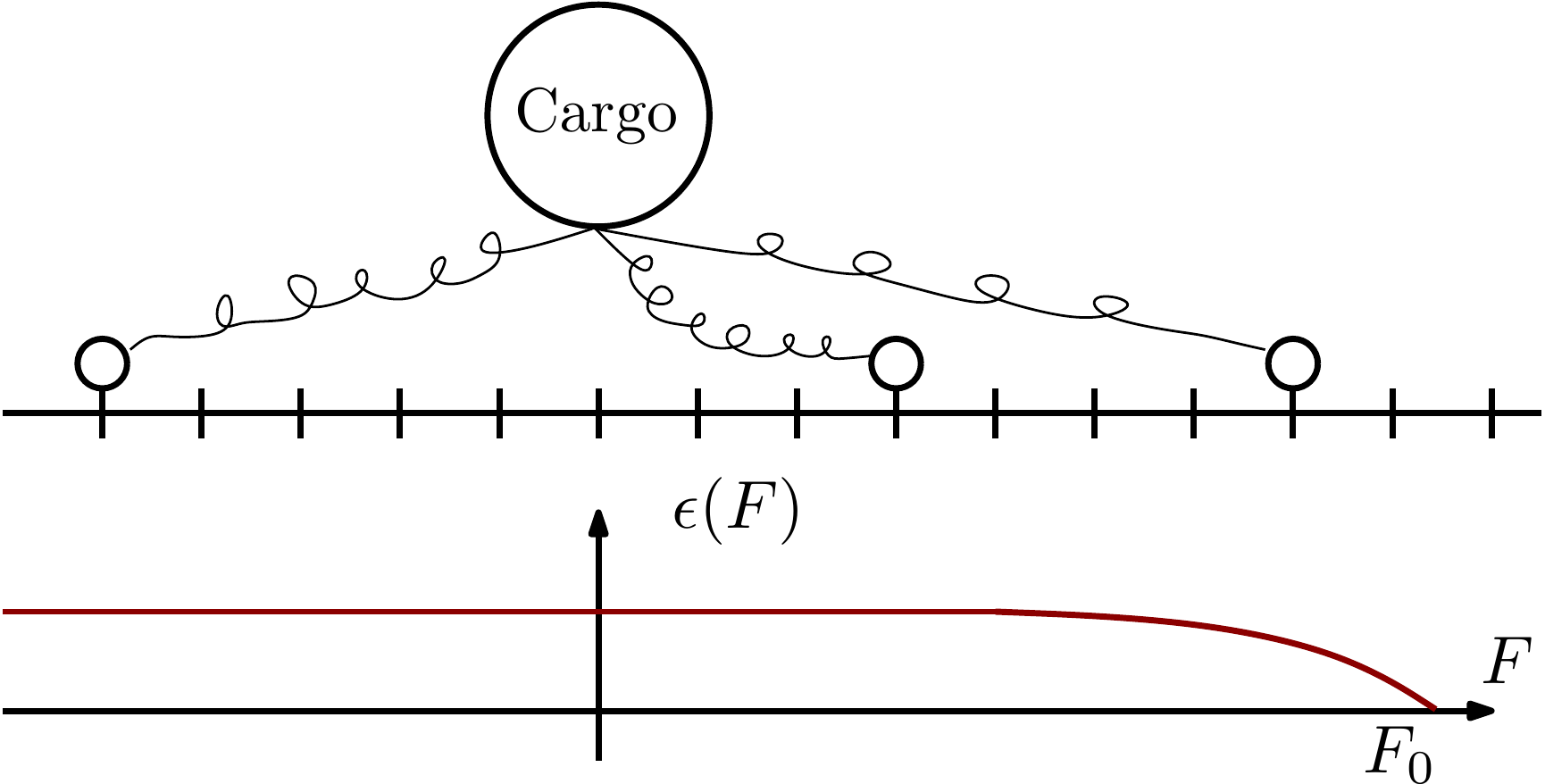}
	\caption{Graphical representation of the ``efficiency'' $\eps(F)$ and its dependence on the force $F$ applied to the motor
	\label{fig:efficiency}
	}
\end{figure}
The linkage is modeled as an elastic spring of elastic constant $K$ and rest length $l_0$ that buckles down with no resistance when compressed. The relation between the force applied to the motor and the linkage length $l$ is given by
\begin{align}
	F(l)=	\left\{\begin{array}{lr}
			K(l+l_0)	&\text{if } l\leq -l_0\\
			0		&\text{if } |l|<l_{0}\\
			K(l-l_0)	&\text{if } l\geq l_0.\\
		\end{array}\right.
\end{align}
\begin{figure}
	\centering
	\includegraphics[width=0.5\columnwidth]{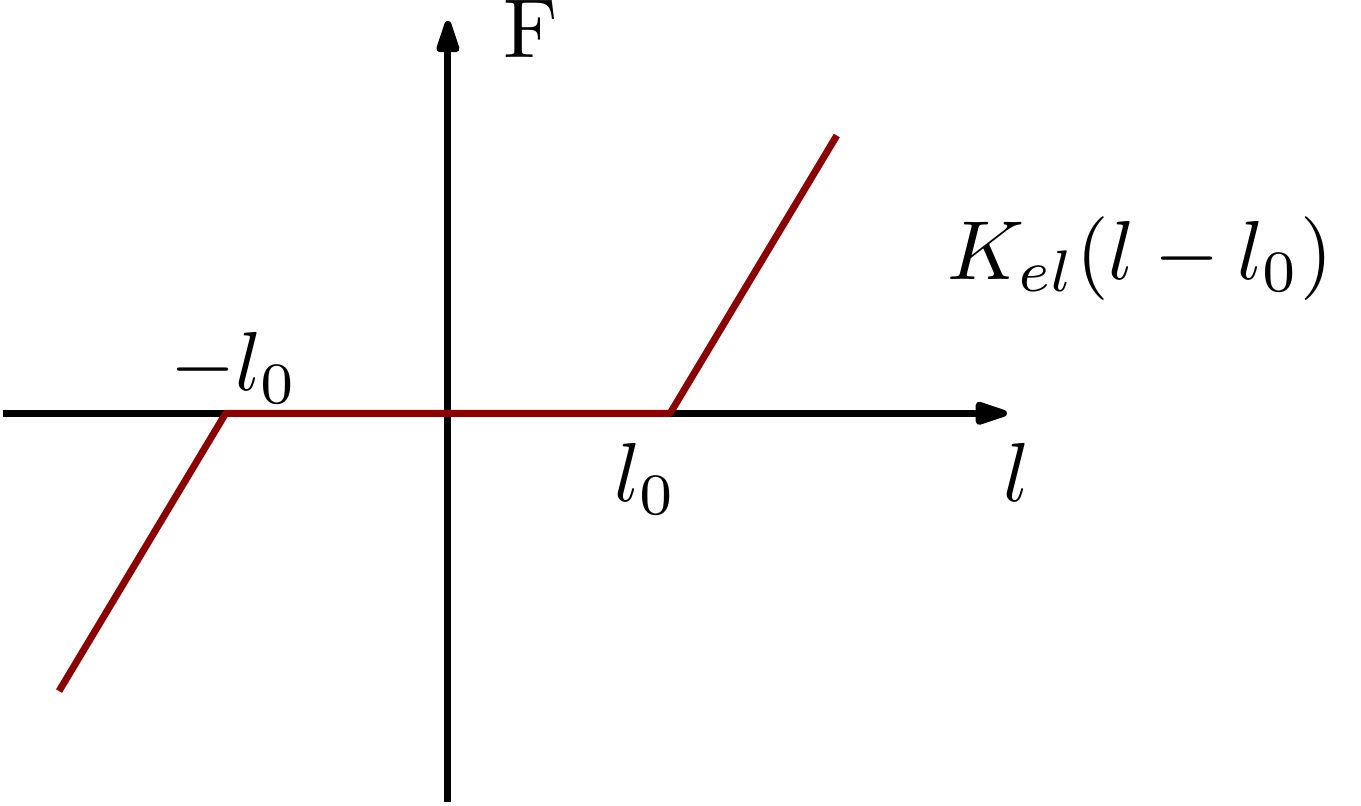}
\end{figure}
In \cite{KunVer08} it is assumed that the force $F$ also influences the kinetics of the ATP hydrolysis. In particular it is assumed that $k_{off}$ increases with increasing $F$
\begin{equation}
	k_{off}=k_{0off}e^{Fd_l/K_bT},
\end{equation}
where $k_{0off}$ is the backward reaction rate of the hydrolysis when $F=0$, $K_b$ is the Boltzmann constant, $T$ is the temperature and $d_l$ is a parameter that can be experimentally determined.
Thus, the transition rate for a step at location $a_k $ is given by
\begin{align}\label{eq: lambda step}
	&\lambda(Z+R_{k}^{(s)},Z)=
		z_k\frac{k_{cat}[ATP]}
		{[ATP]+\frac{k_{on}+k_{off}(F)}{k_{cat}} }\epsilon(F)\\
	&=
		z_k\frac{k_{cat}[ATP]\left[1-\left(\frac{\chi(a_k-x^{(c)})}{F_0}\right)^2\right]}
		{[ATP]+\frac{k_{on}+k_{0off}e^{\chi(a_k-x^{(c)})d_l/K_bT}}{k_{cat}} }
\end{align}

\subsubsection*{Rate for detachment transitions}
From [Schnitzer et al. 2000], the processivity $L$ is
\begin{align}
	L=\frac{d[ATP]Ae^{-F \delta_l K_bT}}
		{[ATP]+B(1+A)e^{-F \delta_l K_bT}},
\end{align}
where $A$, $B$ and $\delta_{l}$ are parameters that can be experimentally determined.
Since the processivity represents how far a motor can move before detaching from the microtubule on average, we find a relation between the probability of stepping and the probability of detachment.
\begin{align}
	\frac{P_{step}(F)}{P_{detach}(F)}=\frac{L}{d}=
		\frac{[ATP]Ae^{-F \delta_l K_bT}}
		{[ATP]+B(1+A)e^{-F \delta_l K_bT}}.
\end{align}
Thus, so long as $F<F_0$, 
\begin{align}
	P_{detach}(F)=\frac{[ATP]+B(1+A)e^{-F \delta_l K_bT}}{[ATP]Ae^{-F \delta_l K_bT}}P_{step}(F).
\end{align}
When $F\geq F_s$, in \cite{KunVer08} a constant detachment rate is assumed $P_{detach}(F)=P_{back}=2s^{-1}$.
Thus, we have the following transition rates
\begin{align}
	&\lambda(Z+R_{k}^{(d)},Z)=
		\left\{\begin{array}{l}
			\frac{[ATP]+B(1+A)e^{-\chi(a_k-x^{(c)}) \delta_l K_bT}}{[ATP]Ae^{-\chi(a_k-x^{(c)} \delta_l K_bT}} \lambda(Z+R_{k}^{(s)},Z)\\
					\qquad \text{ if } \chi(a_k-x^{(c)} < F_s\\
			z_{k} P_{back} \\
					\qquad \text{ if } \chi(a_k-x^{(c)} \geq F_s
		\end{array}\right.
\end{align}

\subsubsection*{Probability of attachment}
Experimentally, it is found in \cite{KunVer08} that the probability of a motor to attach to the microtubule is $P_{att}\simeq 5 s^{-1}$. If the motor is linked to the cargo, it is assumed that is attaches to the microtubule without stretching its linkage. Thus, the only admissible locations of attachment are the locations at a distance from the cargo that is less than $l_0$. They are also assumed all equally likely.

\section{Probabilistic Analysis}\label{sec:analysis}
We have shown that the length of all the possible strings can be uniformly bounded.
As a consequence, apart from possible spatial shifts along the microtubule, the number of possible arrangements of the motors is finite.

Using the finite dimensional Markov Model derived in the previous sections, it is possible to derive meaningful characteristics of the system in an exact manner.

\subsection{Transient analysis}
In Figure~\ref{fig: time histogram 3 motors} we present the dynamics of $P(t)$  in the case $\overline{m}=3$ for two different values of $F_{load}$.
We use $4$ time snapshots to describe how $P(t)$ changes with time.
\begin{figure*}
	\centering
	\includegraphics[width=0.95\textwidth]{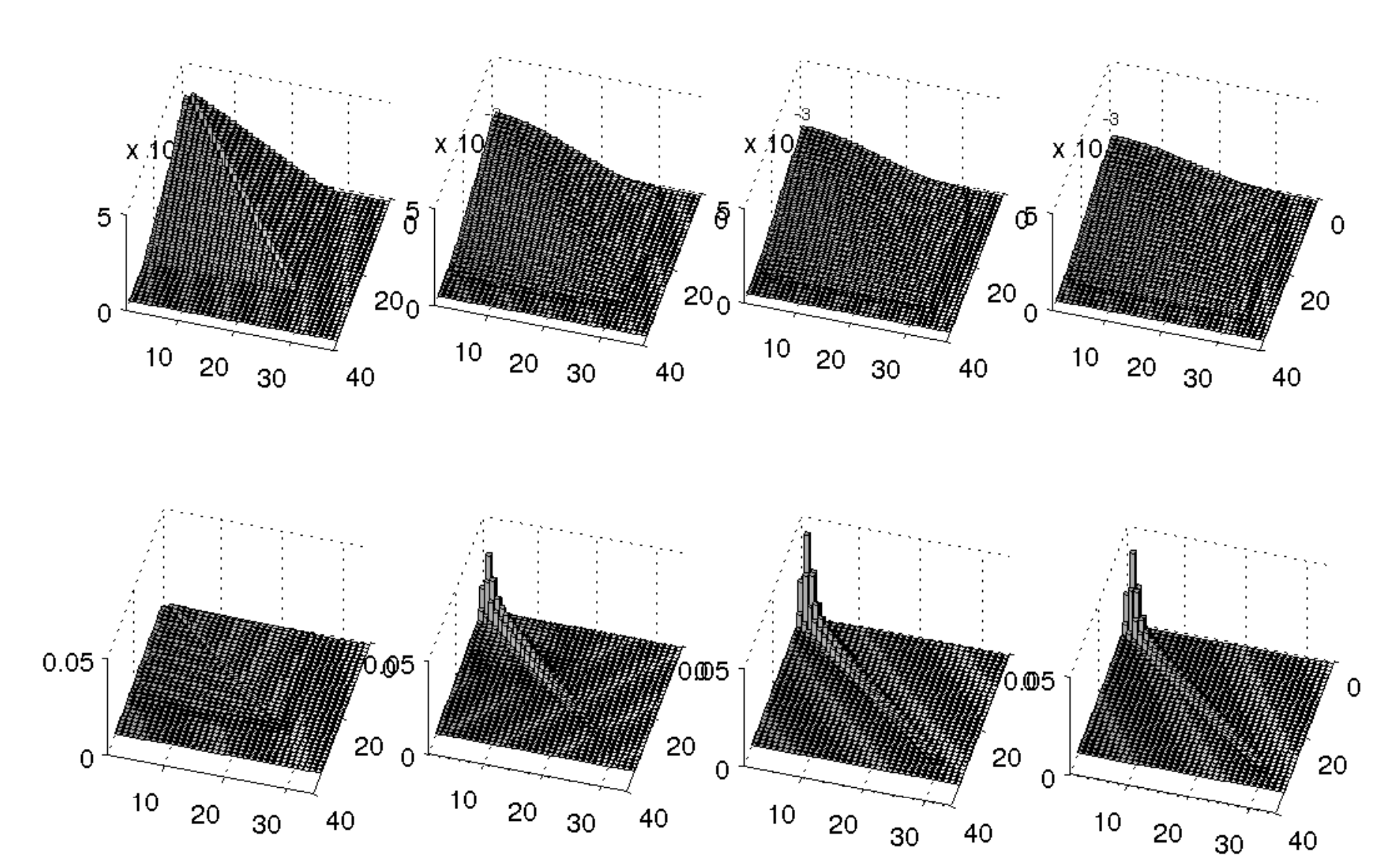}
	\caption{Four snapshots of the evolution of the probability density function in the case of low load (top row) and in the case of high load (bottom row)
	\label{fig: time histogram 3 motors}
	}
\end{figure*}
In the first row of Figure~\ref{fig: time histogram 3 motors} we present the probability density function of the configurations 
when the cargo is subject to a low load ($F_{load}=0.0002 nN$). The $x$ and the $y$ axis represent the distance of the two leading motors from the rearguard motor. 
The distance is expressed in number of locations on the microtubule.
The $z$ axis represents the probability of having a motor at distance $x$ and a motor a distance $y$ from the rearguard motor. The probabilities of configurations with less than three motors engaged are not reported because that would make the visualization unnecessarily more complicated.
The four snapshots are taken at time $0 sec$, $0.2 sec$, $0.4 sec$, $0.6 sec$ from left to right.
In the second row of Figure~\ref{fig: time histogram 3 motors} we report the results in the case of a high load ($F_{load}=0.008 nN$) with an identical initial probability density function.
Notice how the change in the parameter $F_{load}$ leads to two different dynamics of the probability density function: under a low load the motors tend to spread out while under a high load certain configurations tend to be more likely.

\subsubsection{Average runlength}
The detailed knowledge of the probability vector $P(t)$ can be used to derive information about numerous aspects of the behavior of the motor ensemble and study the variation of certain quantities as a function of different parameters. An important quantity that can also be experimentally measured in experiments is the expected runlength of the motors, that is the average length travelled by the cargo before being lost.
The average runlength is immediately computed from $P(t)$
\begin{align}
	Average~Runlength=E\left[
				\int_{0}^{+\infty} \frac{dx^{(c)}}{dt} dt
			\right]
\end{align}
where $E[\cdot]$ is the mean operator and $x^{(c)}$ is the cargo position.
This same quantity as a function of the load has already been computed in \cite{KunVer08} using Monte Carlo methods in two scenarios. In one scenario (Model A) the authors neglect the effect of thermal noise and in another scenario (Model B) they provide a dynamic model for the brownian motion of the cargo.\\
We report our results in Figure~\ref{fig: runlength} for two reasons.
The first reason is for comparison.
By forcing the variance of the cargo position to zero we obtain a model that is equivalent to ``Model A'' in \cite{KunVer08}. The only difference is that in our case we are computing the average runlength from the probability density function without using a simulation approach.
Thus, forcing the variance of the cargo position to zero, the results must match.
We report the results of our computation in Figure~\ref{fig: runlength}~a using both a coarse grid (solid lines)  and a fine grid (dashed lines)  for the load force.
If we compare the results we computed with the results in \cite{KunVer08} at the same points (that is using the coarse grid), the runlength-load curves overlap almost perfectly.
\begin{figure*}
	\centering
	\begin{tabular}{cc}
		\includegraphics[width=0.45\columnwidth]{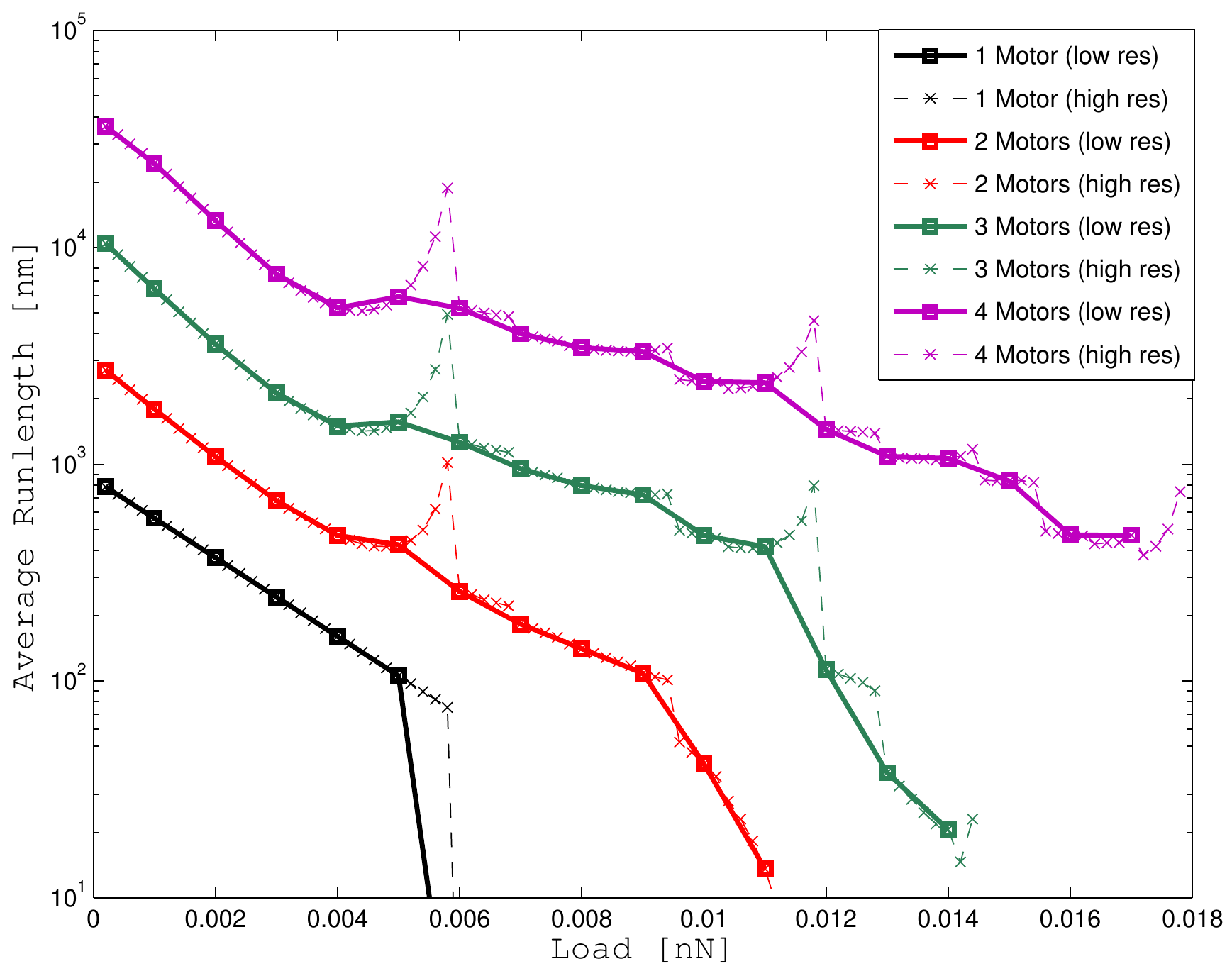}
		\includegraphics[width=0.45\columnwidth]{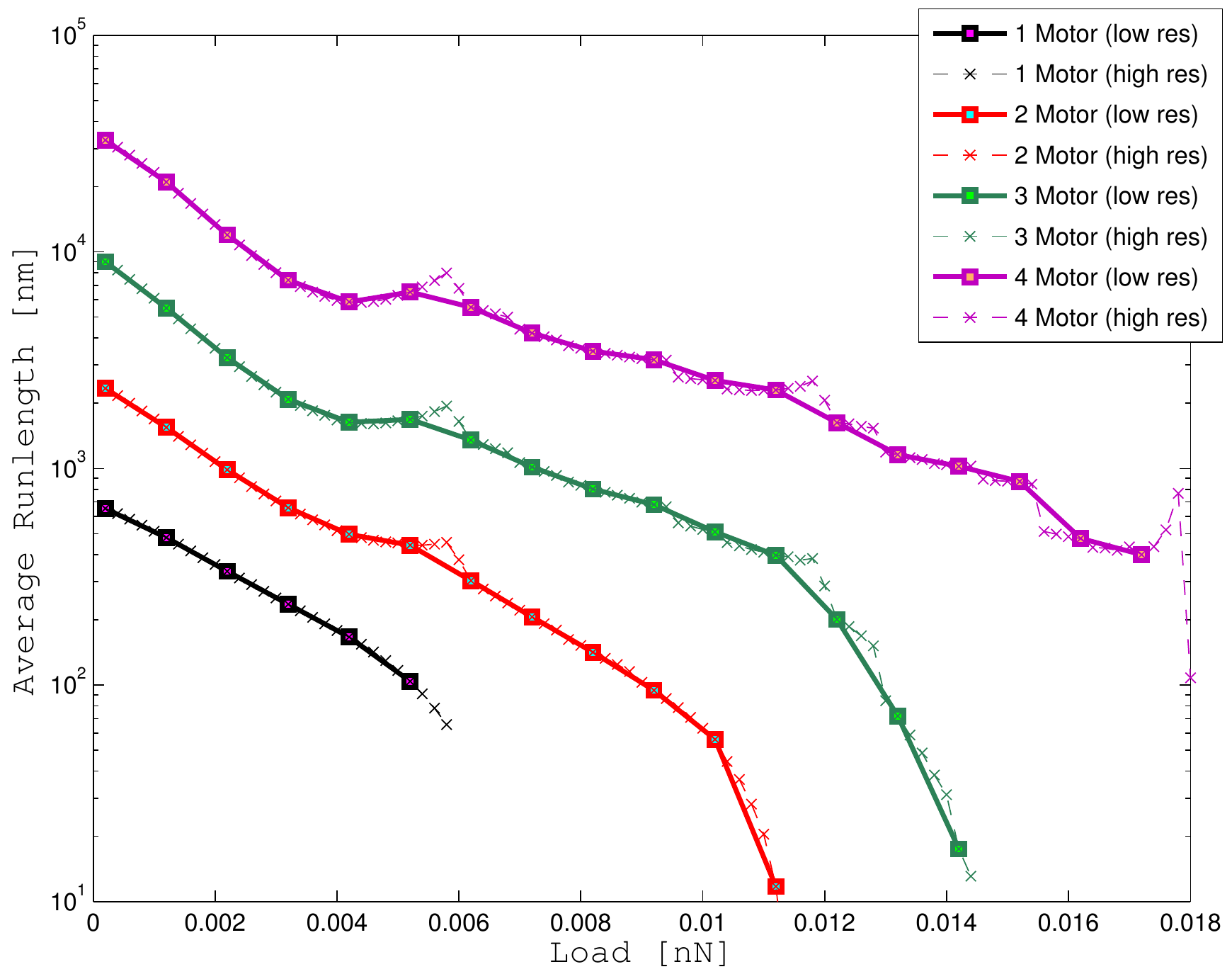}\\
		(a) & (b)
	\end{tabular}
	\caption{Average runlength neglecting thermal noise on the cargo (a) and considering it (b).
	\label{fig: runlength}
	}
\end{figure*}
The second reason is to highlight a possible weakness of the model adopted for the detachment rate.
In the finer scale, we can notice peaks in the runlength curves that were not evident before.
These peaks appear in correspondence of loads that are multiple of the stalling force $F_0$ and can be explained as ``artifacts'' of the model.
They can be explained in the following way.
Let us consider the case $\overline{m}=2$ for simplicity. When only one motor is engaged ($m(t)=1$) and $F_{load}$ is close to (but less than) the stalling force $F_{0}$, the probability of detachment becomes small, and the loss of the cargo becomes unlikely.
In the meantime, the disengaged motor has the opportunity of attaching to microtubule, catching up with the leading motor and moving the cargo a little further.
For $\overline{m}>2$, similar arguments can be used.
This mechanisms explains how, in the mathematical model, by neglecting the brownian motion of the cargo, the expected runlength tends to increase while the load approaches multiple values of the stalling force.\\
When the brownian motion of the cargo is taken into account (see Figure~\ref{fig: runlength}~b) those peaks are smoothed down, but do not disappear completely.
Thus, the pronounced peaks in Figure~\ref{fig: runlength}~a and Figure~\ref{fig: runlength}~b are due to the inaccuracy of the equation (\ref{eq: lambda step}) for $F\simeq F_{s}$.
Still, the insights given by the model could have a physical meaning and could be used to explain possible mild breaks of monotonicity in the runlength curves for multiple motors.

\subsection{Steady state characteristics}

\subsubsection{Average number of engaged motors}
In Figure~\ref{fig: histogram number of motors}~a, the probability distribution $P(t)$ has been used to determine the probability of having a given number of motors engaged on the microtubule at time $t$. 
\begin{figure*}
	\begin{tabular}{cc}
	\includegraphics[width=0.45\columnwidth]{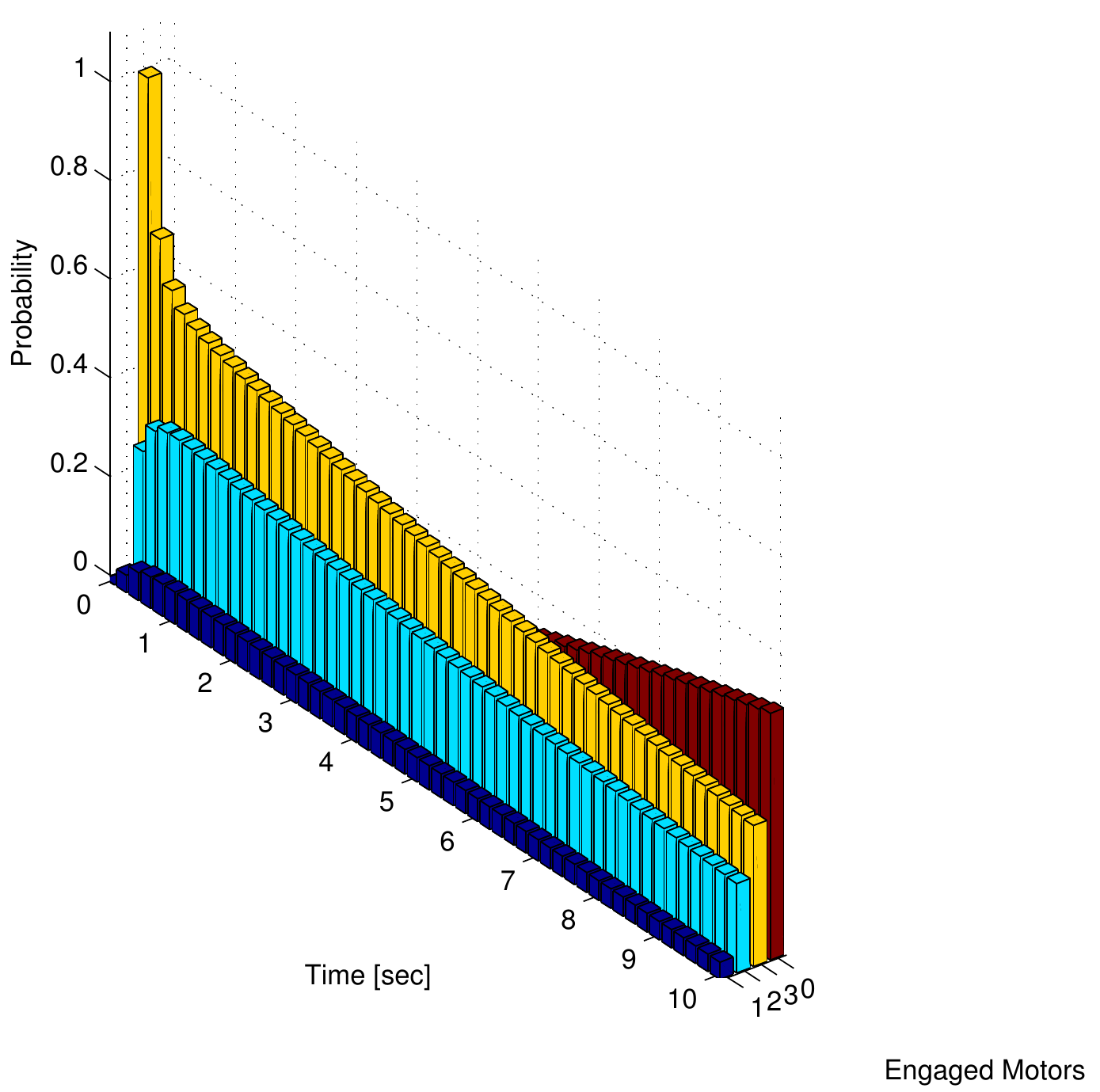} &
	\includegraphics[width=0.45\columnwidth]{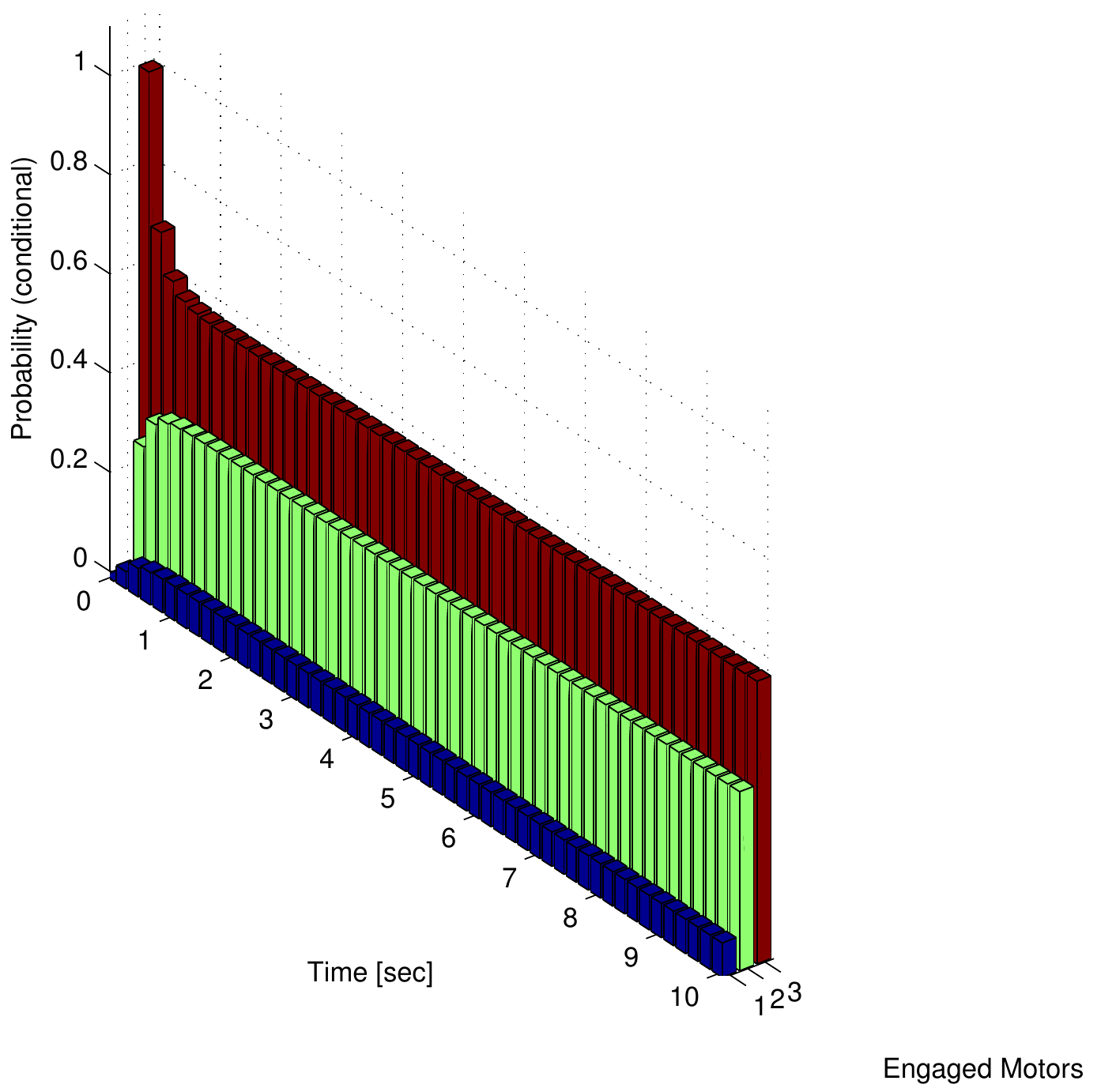}\\
	(a) & (b)
	\end{tabular}
	\caption{Histogram of the number of motors (left) and histogram of the number of motors assuming that the cargo has not been lost, that is $m(t)>0$.
	\label{fig: histogram number of motors}}
\end{figure*}
We notice that the probability of cargo loss increases with time (and it can be proved converges to $1$ for $t\rightarrow +\infty$). Thus, it is not possible to define a steady state for the system: the eventual cargo loss is the actual steady state.
However, in Figure~\ref{fig: histogram number of motors}~b we report the conditional probability density function of the number of motors engaged on the microtubule at time $t$, given that at least one motor is engaged. We observe that using this conditional probability density function a steady state can be defined.
In the case of $\overline{m}=3$, the steady state conditional probability is reported in the histograms of Figure~\ref{fig:steady state cond pdf}~a and in Figure~\ref{fig:steady state cond pdf}~b for $F{load}=0.0002 nN$ and $F{load}=0.008 nN$ respectively.
As in Figure~\ref{fig: time histogram 3 motors}, the rearguard motor is taken as a reference point, the $x$ and $y$ axis represent the distance of the other two motors from the rearguard motor and the $z$ axis represents the probability.
\begin{figure*}
	\begin{tabular}{cc}
	\includegraphics[width=0.45\columnwidth]{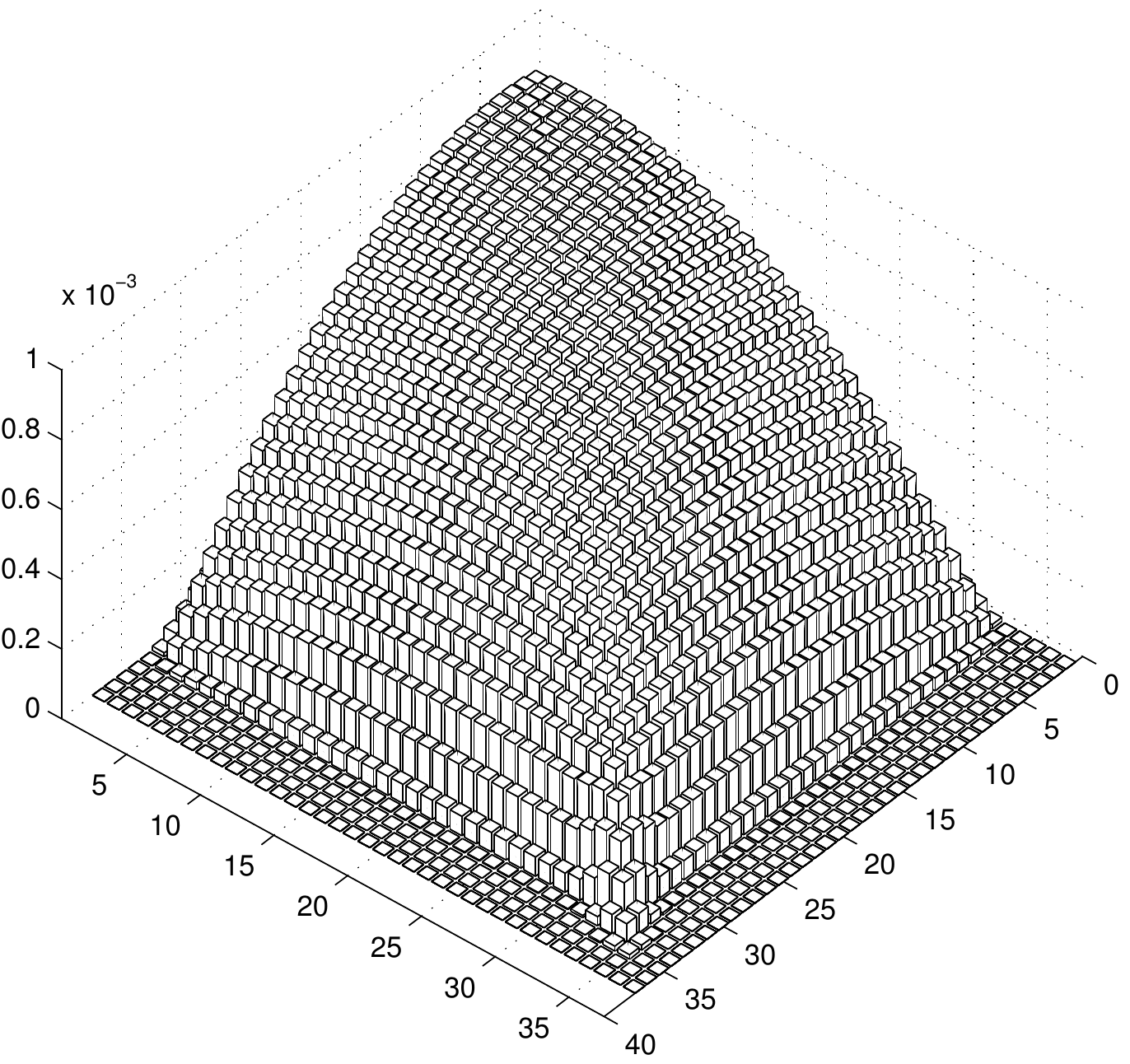} &
	\includegraphics[width=0.45\columnwidth]{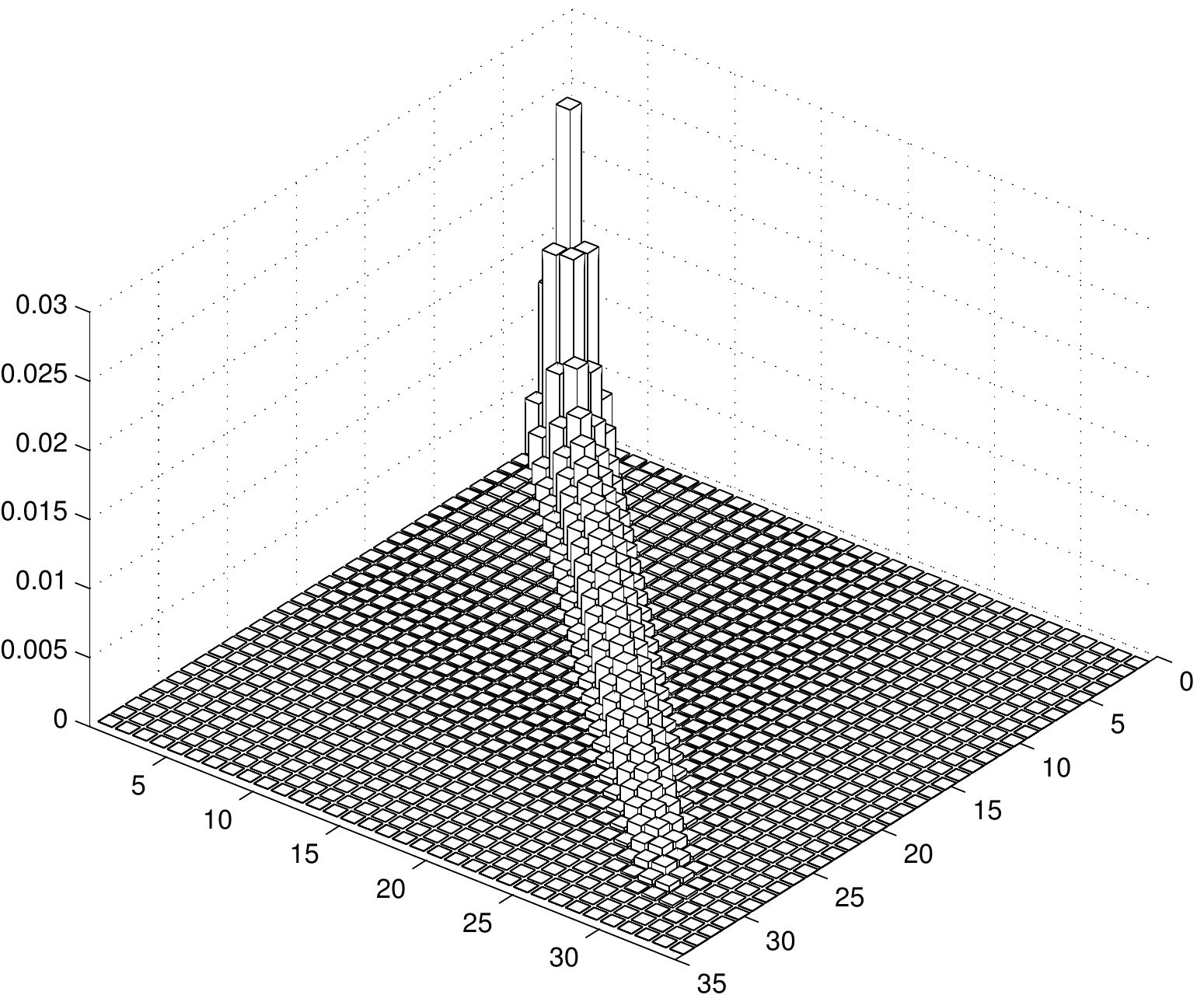}\\
	(a) & (b)
	\end{tabular}
	\caption{
	\label{fig:steady state cond pdf}
	}
\end{figure*}
It can be observed that under a low load the motors tend to be spread and distant from each other, while under a high load they tend to be clusterized. Indeed, the pronounced probability peak around the point $x=0, y=0$ indicates that the three motors are not far from each other. It is interesting to notice that there is a significantly higher probability along the diagonal line $x=y$.
This can be interpreted as follows.
The three motors tend to be close together, but when they are not it is more likely to find the two leading motors together while the rearguard motor is lagging behind.

\subsubsection{Average velocity}
The definition of the steady state conditional probability can be used to determine the average steady state velocity of the motor ensemble.
The results are reported in Figure~\ref{fig: avg vel}~a for the case where thermal fluctuations are neglected and in Figure~\ref{fig: avg vel}~b for the more accurate model.
Results in Figure~\ref{fig: avg vel}~a are again for comparison and verification since, in the noiseless scenario, our model is equivalent to Model A in \cite{KunVer08}.
Results qualitatively and quantitatively non-dissimilar are reported in Figure~\ref{fig: avg vel}~b for the noisy scenario.
\begin{figure*}
	\centering
	\begin{tabular}{cc}
	\includegraphics[width=0.45\columnwidth]{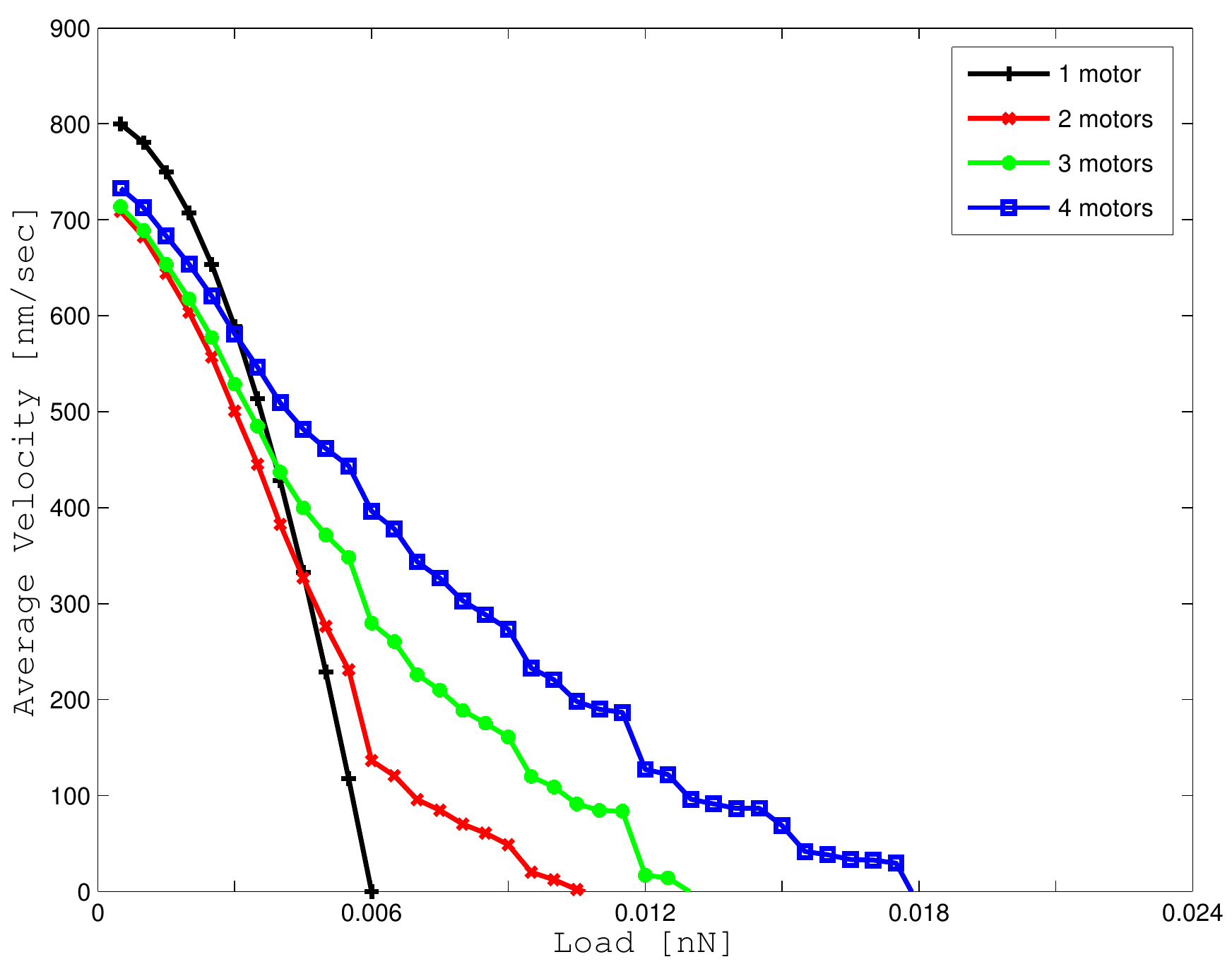} &
	\includegraphics[width=0.45\columnwidth]{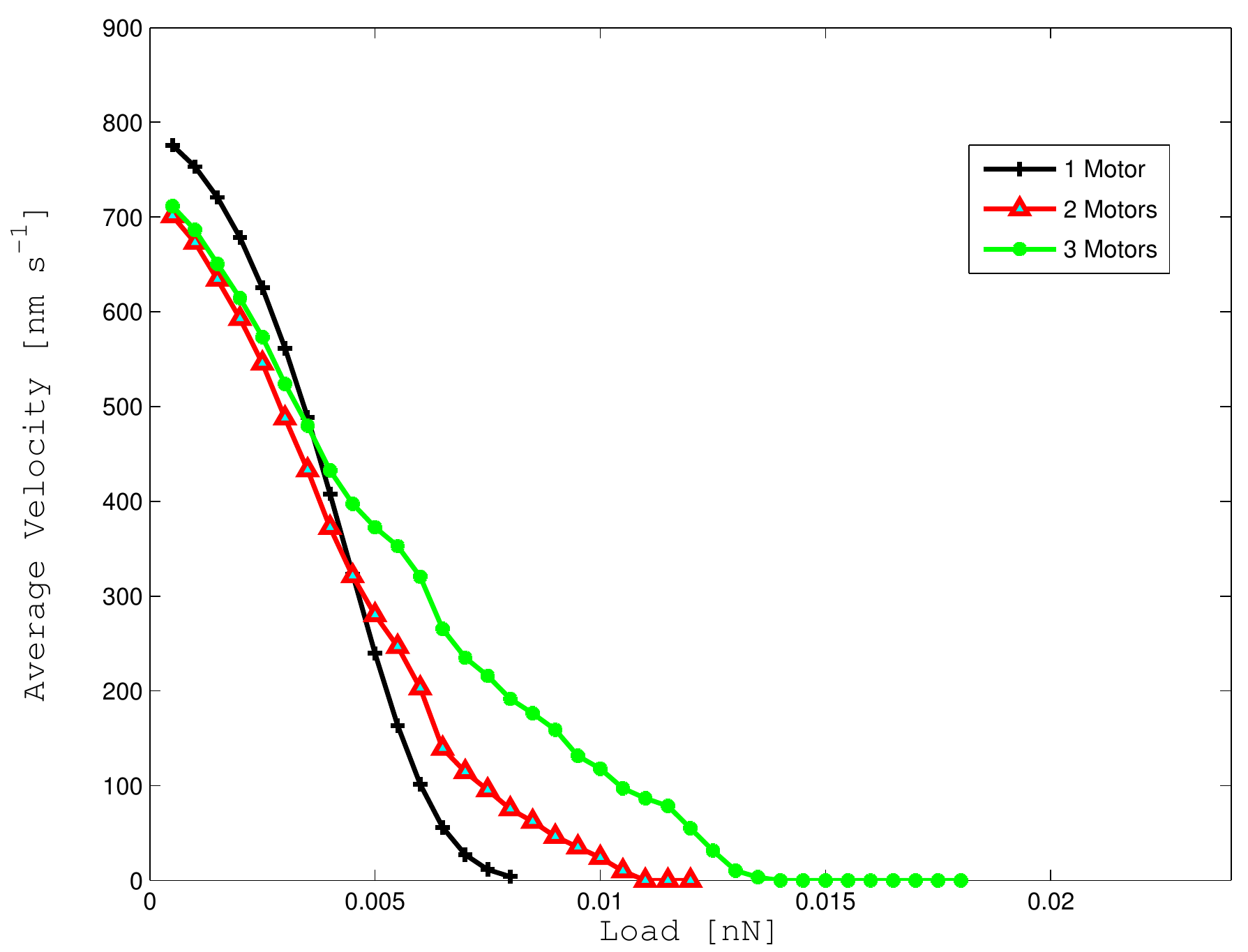}\\
	(a) & (b)
	\end{tabular}
	\caption{
	\label{fig: avg vel}
	}
\end{figure*}

\subsection{Detection of rare events}
The possibility of determining exactly the probability distribution of the different motor configurations allows for the detection of rare events.
It is possible, for example, to determine the probability of the different steps sizes for an ensemble of $2$ motors as represented in Figure~\ref{fig:step histogram 2 motors}.
\begin{figure*}
	\begin{tabular}{cc}
	\includegraphics[width=0.45\columnwidth]{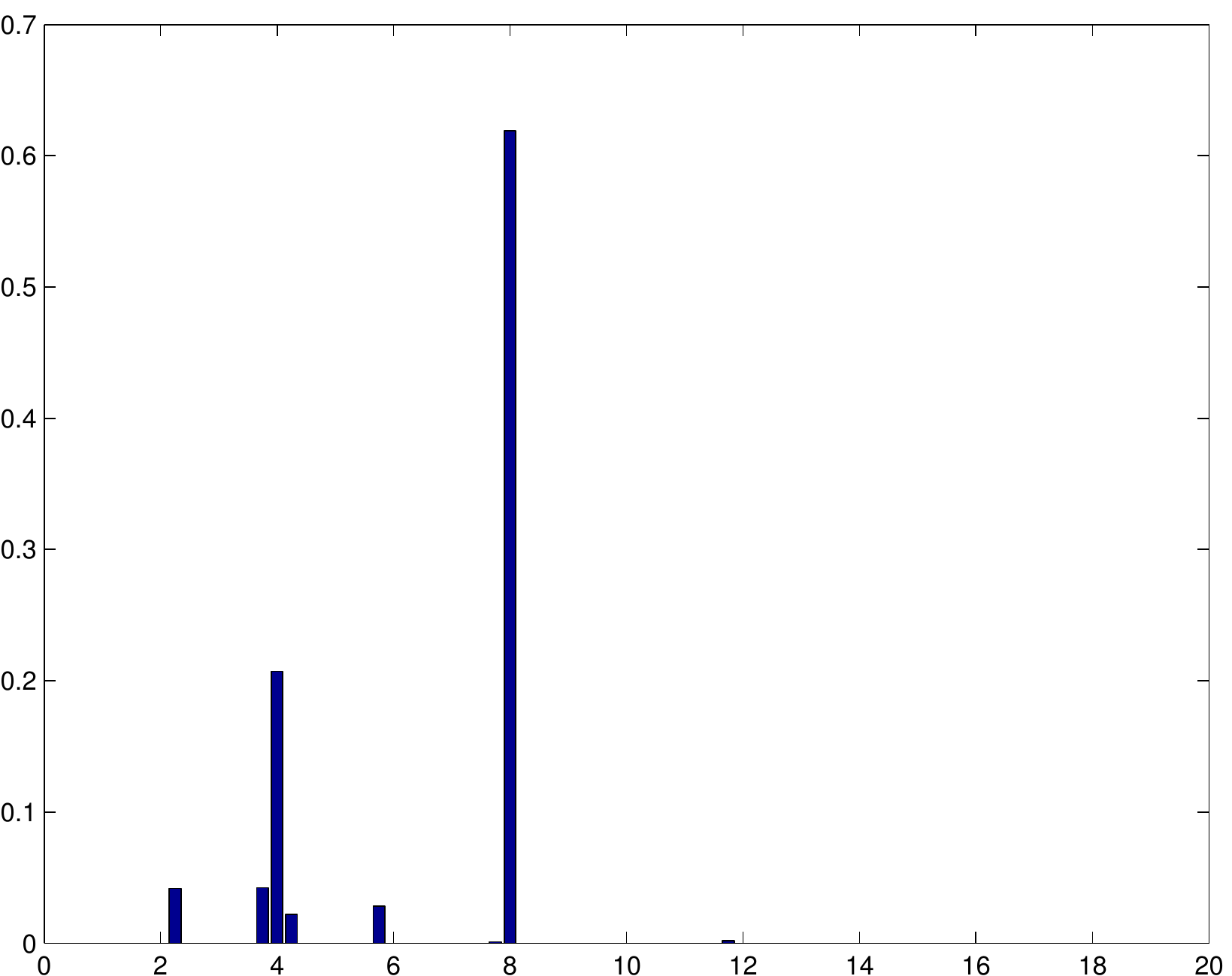} &
	\includegraphics[width=0.45\columnwidth]{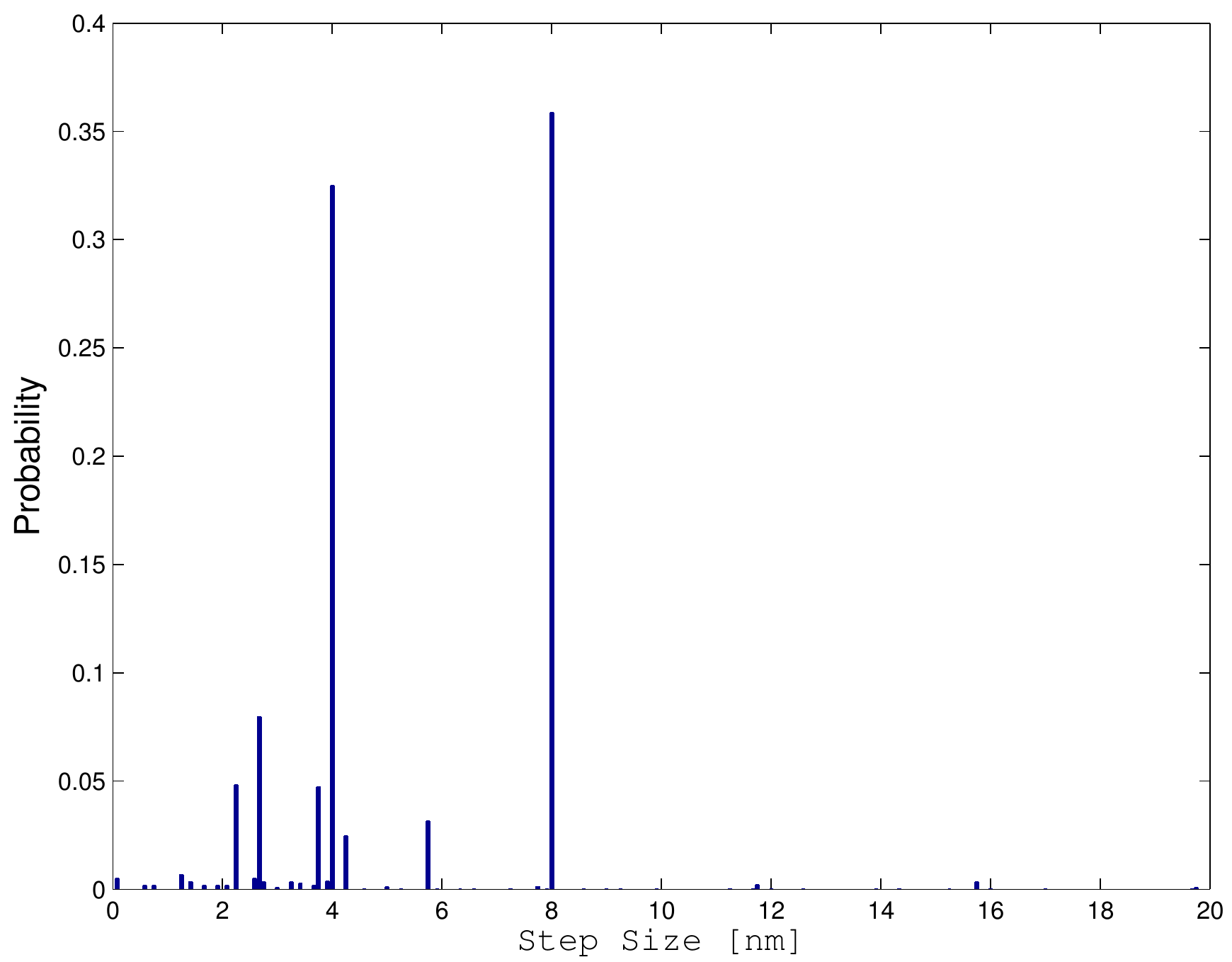}\\
	(a) & (b)
	\end{tabular}
	\caption{
	\label{fig:step histogram 2 motors}
	\label{fig:step histogram 3 motors}
	}
\end{figure*}
As it can be noted, there are two prominent peaks corresponding to $8~nm$ and $4~nm$.
These peaks corresponds respectively to the case where there is one active motor before and after the step and to the case where there are two active motors before and after the step.
There are also different predicted step sizes at about $2~nm$, $6~nm$  and around $4~nm$.
They correspond to situations where there is a different number of active motors before and after the step and they can be mostly considered as artifacts of the model.
However, we also find a small probability, but different from zero, of steps larger than $8~nm$ at about $11~nm$. This does not seem possible, since the each motor can advance only $8~nm$ at the time.
Indeed, this can occur when the rearguard motor is pulling back the cargo and detaches from the microtubule. While this situation is not very likely in a bead-assay experiment that our simulations are trying to portray, it is quite likely that two motors could pull in two opposite directions in a gliding assay experiment. Thus, steps larger that $8~nm$ as those described in
\cite{LedRuh07} could be originated by a mechanism of this kind.
The probability distribution of the step size for an ensemble of $3$ motors is reported in Figure~\ref{fig:step histogram 3 motors} where the steps longer than $8~nm$ have an identical interpretation.

\appendix

\subsection{Preliminary Lemmas}

\begin{lem}\label{lem: unique cargo eq}
	If $F_{load}>0$ and $m>0$, then the equilibrium solution $x^{(c)}$ to (\ref{eq:equilibrium}) for any position sequence $X=(x_1,...,x_m)$ and for $\chi$ as given by (\ref{eq:elastic dead zone}) exists, is unique and
	$x_{m}-l_0>x^{(c)}$.
\end{lem}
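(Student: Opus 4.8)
The plan is to study the real function $g(y) := \sum_{i=1}^{m} \chi(x_i - y)$ of the cargo position $y \in \R$ and to show that it attains the value $F_{load}$ exactly once. First I would record the elementary structural facts about $\chi$ from (\ref{eq:elastic dead zone}): it is continuous and non-decreasing on $\R$, it vanishes precisely on $[-l_0, l_0]$, it is strictly negative on $(-\infty, -l_0)$ and strictly positive on $(l_0, +\infty)$. Hence each map $y \mapsto \chi(x_i - y)$ is continuous and non-increasing, so $g$ is continuous and non-increasing as a finite sum of such maps. Since $m \ge 1$, as $y \to -\infty$ every $x_i - y \to +\infty$ and therefore $g(y) \to +\infty$, while as $y \to +\infty$ every $x_i - y \to -\infty$ and $g(y) \to -\infty$. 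By the intermediate value theorem there is at least one $x^{(c)}$ with $g(x^{(c)}) = F_{load}$, which gives existence.

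For uniqueness I would exploit that $g$ is piecewise affine with breakpoints contained in the finite set $\{x_i - l_0, x_i + l_0 : 1 \le i \le m\}$, and that on each open interval between consecutive breakpoints its slope equals $-K\,\#\{i : |x_i - y| > l_0\} \le 0$. Since $g$ is continuous and non-increasing, the solution set $\{y : g(y) = F_{load}\}$ is a closed interval; if it were non-degenerate it would contain a whole subinterval lying strictly between two breakpoints, on which $g$ would have slope $0$, forcing $|x_i - y| < l_0$ for every $i$ there and hence $g \equiv 0$ on that subinterval. This contradicts $F_{load} > 0$, so the solution set is a single point and $x^{(c)}$ is unique.

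Finally, for the inequality $x_m - l_0 > x^{(c)}$, I would argue by contradiction: if instead $x_m - x^{(c)} \le l_0$ then, because $X$ is non-decreasing, $x_i - x^{(c)} \le l_0$ for every $i$, so each term satisfies $\chi(x_i - x^{(c)}) \le 0$ (it is $0$ on $[-l_0, l_0]$ and strictly negative below $-l_0$). Summing over $i$ yields $F_{load} = g(x^{(c)}) \le 0$, contradicting $F_{load} > 0$; hence $x_m - x^{(c)} > l_0$, that is, $x_m - l_0 > x^{(c)}$.

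The only mildly delicate point is uniqueness, since $g$ is merely non-increasing rather than strictly decreasing; the observation that unlocks it — and that is reused for the final inequality — is that the ``flat'' portions of $g$ occur exactly where all motor linkages are slack, on which $g$ necessarily takes the value $0$. Everything else is routine continuity and monotonicity bookkeeping.
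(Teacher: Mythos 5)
Your proof is correct and complete. The paper itself leaves this lemma's proof to the reader, so there is no official argument to compare against; your route --- existence via continuity, monotonicity, and the intermediate value theorem applied to $g(y)=\sum_i \chi(x_i-y)$, uniqueness via the observation that any flat stretch of the piecewise-affine $g$ forces all linkages slack and hence $g\equiv 0$ there (impossible since $F_{load}>0$), and the inequality $x_m-l_0>x^{(c)}$ by the same slackness observation --- is a natural and rigorous way to fill the gap.
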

\begin{proof}
	The proof is left to the reader.
\end{proof}

\begin{lem}\label{lem:decreasing step}
	Consider an EMM meeting the KVXG rules with parameters $(\overline{m}, K, l_0, F_s)$.
	Assume that $F_{load}>0$, and that, for a configuration $Z$, the number of engaged agents is $m>0$ such that the cargo position $x^{(c)}$ is uniquely defined.
	Also assume $X=\phi_A(Z)=(x_1,...,x_m)$ for a location sequence $A$ and that
	\begin{equation}
		s(Z):=x_m-x_1 \geq  \frac{m F_s-F_{load}}{K}+2l_0.
	\end{equation}
	Then the vanguard motor is stalled, that is $F_s\leq \chi(x_m-x^{(c)})$.
	If the vanguard motor is in the location $a_{k_m}$
	it immediately follows that, for $k\geq k_m$, $\lambda(Z+R^{(s)}(k),Z)=0$.
\end{lem}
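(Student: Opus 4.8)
The plan is to argue by contradiction: suppose the vanguard motor is not stalled, i.e. $\chi(x_m-x^{(c)})<F_s$, and derive an upper bound on the spread $s(Z)=x_m-x_1$ that violates the hypothesis $s(Z)\ge\frac{mF_s-F_{load}}{K}+2l_0$.

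First I would isolate two elementary facts about the stretch function $\chi$ of (\ref{eq:elastic dead zone}): it is non-decreasing, and it obeys the sandwich bound $K(l-l_0)\le\chi(l)\le K(l+l_0)$ for every $l\in\R$. Both follow from a one-line check on the three branches $l>l_0$, $|l|\le l_0$, $l<-l_0$. These bounds are the tool that turns force estimates into position estimates uniformly, no matter which regime a given motor linkage is in (in particular even when the rearguard linkage is so compressed or so stretched that $\chi$ takes a negative or large value).

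Next I would invoke the equilibrium equation (\ref{eq:equilibrium}) together with monotonicity. Since the position sequence is non-decreasing, $x_i\le x_m$, so the contradiction hypothesis gives $\chi(x_i-x^{(c)})\le\chi(x_m-x^{(c)})<F_s$ for every $i$. Solving (\ref{eq:equilibrium}) for the rearguard term, $\chi(x_1-x^{(c)})=F_{load}-\sum_{i=2}^{m}\chi(x_i-x^{(c)})\ge F_{load}-(m-1)F_s$. Applying the upper half of the sandwich bound to this lower bound yields $x_1-x^{(c)}\ge\frac{F_{load}-(m-1)F_s}{K}-l_0$, and applying the lower half to $\chi(x_m-x^{(c)})<F_s$ yields $x_m-x^{(c)}<\frac{F_s}{K}+l_0$. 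Subtracting,
\[
s(Z)=(x_m-x^{(c)})-(x_1-x^{(c)})<\frac{F_s}{K}+l_0-\frac{F_{load}-(m-1)F_s}{K}+l_0=\frac{mF_s-F_{load}}{K}+2l_0,
\]
contradicting the assumed lower bound on $s(Z)$. Hence $\chi(x_m-x^{(c)})\ge F_s$. (Well-definedness of $x^{(c)}$ is part of the hypothesis, and is in any case guaranteed by Lemma~\ref{lem: unique cargo eq} since $F_{load}>0$, $m>0$.)

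For the final assertion: if the vanguard motor occupies $a_{k_m}$ then $x_m=a_{k_m}$, so for each $k\ge k_m$ we have $a_k=x^{(0)}+kd_s\ge a_{k_m}=x_m$, whence monotonicity of $\chi$ gives $\chi(a_k-x^{(c)})\ge\chi(x_m-x^{(c)})\ge F_s$, and the stalling-force clause of the KVXG rules forces $\lambda(Z+R^{(s)}_k,Z)=0$. The only mildly delicate point is that $F_{load}-(m-1)F_s$ may be negative, so one cannot translate the bound on $\chi(x_1-x^{(c)})$ into a bound on $x_1-x^{(c)}$ by inverting $\chi$ on its stretched branch alone; the two-sided affine sandwich for $\chi$ is precisely what handles this uniformly, and the rest is bookkeeping on the equilibrium identity.
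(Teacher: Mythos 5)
Your proof is correct and follows the same high-level strategy as the paper's: assume $\chi(x_m-x^{(c)})<F_s$, feed this into the equilibrium identity (\ref{eq:equilibrium}) via monotonicity of $\chi$, and contradict the assumed lower bound on the spread. The genuine difference is the device used to convert force bounds into position bounds. The paper splits into the cases $x_1<-l_0$ and $x_1\geq -l_0$ and manipulates the explicit branches of (\ref{eq:elastic dead zone}) (after using Lemma~\ref{lem: unique cargo eq} to know $x_m-x^{(c)}>l_0$, hence $\chi(x_m-x^{(c)})=K(x_m-x^{(c)}-l_0)$), whereas you use the two-sided affine bound $K(l-l_0)\leq\chi(l)\leq K(l+l_0)$, which makes the case split unnecessary. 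This buys more than brevity: the paper's second case is actually defective as printed, since the displayed chain $F_{load}\leq mF_s<mF_s+K(x_m-l_0)$ only yields $F_{load}-K(x_m-l_0)<mF_s$, whereas the claimed conclusion $x_m+l_0<\frac{mF_s-F_{load}}{K}+2l_0$ requires $F_{load}+K(x_m-l_0)<mF_s$; the crude estimate $x_1\geq -l_0$ is not strong enough when $F_{load}>(m-1)F_s$, and one must use the sharper lower bound on $x_1$ coming from the equilibrium identity --- which is exactly what your sandwich argument supplies uniformly in all regimes of $\chi(x_1-x^{(c)})$. (The first display of the paper's first case also carries a stray $F_{load}$ on its right-hand side, evidently a typo.) You additionally spell out the final implication for $k\geq k_m$ via monotonicity of $\chi$ and the stalling clause of the KVXG rules, which the paper leaves as ``immediate.'' I see no gaps in your argument.
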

\begin{proof}
	Without any loss of generality, assume $x^{(c)}=0$, since the reference point on the microtubule is arbitrary.
	By contradiction, and by using Lemma (\ref{lem: unique cargo eq}), assume $K(x_m-l_0)<F_s$.
	If $x_1<-l_0$, then we have
	\begin{align}
		& F_{load}-K(x_{1}+l_0)= F_{load}+\sum_{i=2}^{m}\chi(x_{i}) \leq \\
		& \leq (m-1)K(x_{m}-l_0) < -K(x_{m}-l_0)+mF_s
	\end{align}
	This implies 
	\begin{align}
		x_{m}-x_{1} < \frac{m F_s-F_{load}}{K}+2 l_0
	\end{align}
	that is a contradiction.
	In the case $x_{1}\geq-l_0$, we have
	\begin{align}
		F_{load} = \sum_{i=1}^{m}\chi(x_{i})\leq mF_s < mF_s+K(x_m-l_0).
	\end{align}
	implying
	\begin{align}
		x_{m}-x_{1}\leq x_m+l_0 < \frac{m F_s-F_{load}}{K}+2 l_0,
	\end{align}
	that is again a contradiction.
\end{proof}

\begin{lem}\label{lem:decreasing attachment}
	Consider an EMM meeting the KVXG rules with parameters $(\overline{m}, K, l_0, F_s)$.
	Assume that $F_{load}>0$, and that, for a configuration $Z$, the number of engaged agents is $m>0$ such that the cargo position $x^{(c)}$ is uniquely defined.
	Assume that $s(Z)=x_{m}-x_1\leq q$,
	with 
	\begin{align}
		q\geq\frac{F_{load}}{K}+2l_0,
	\end{align}
	and that the next transition occurring is an attachment at position $k$, that is $Z'=Z+R^{(a)}(k)$.
	Then, if $\lambda(Z',Z)>0$
	\begin{align}
		s(Z')\leq q.
	\end{align}
\end{lem}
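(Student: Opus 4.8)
The plan is to show that attaching a new motor at a location $a_k$ that is within $l_0$ of the cargo cannot extend the spread beyond $q$, provided the old spread was at most $q$. The key observation is that the attachment rule of the KVXG rules forces $|a_k - x^{(c)}| \le l_0$. So the new motor sits in the interval $[x^{(c)}-l_0, x^{(c)}+l_0]$. The spread can only grow if the new motor lands strictly to the left of the current rearguard $x_1$ or strictly to the right of the current vanguard $x_m$; I will bound how far outside $[x_1,x_m]$ such a point can be by locating $x_1$ and $x_m$ relative to $x^{(c)}$.

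**First I would** set up the geometry. By Lemma~\ref{lem: unique cargo eq}, since $F_{load}>0$ and $m>0$, the cargo equilibrium $x^{(c)}$ exists and is unique, and $x_m > x^{(c)} + l_0$ — wait, actually the lemma states $x_m - l_0 > x^{(c)}$, i.e. $x_m > x^{(c)}+l_0$; hmm, re-reading, it says $x_m - l_0 > x^{(c)}$ which is $x_m > x^{(c)}+l_0$. Good — so the vanguard is strictly to the right of the attachment window, hence $a_k \le x^{(c)}+l_0 < x_m$, so attachment never creates a new vanguard; the new vanguard of $Z'$ is still $x_m$. Therefore the only way the spread increases is if $a_k < x_1$, producing a new rearguard $a_k$. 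Then $s(Z') = x_m - a_k$. I would bound this by $x_m - a_k \le (x^{(c)} + \text{(distance from } x^{(c)} \text{ to } x_m\text{)}) - (x^{(c)} - l_0) = l_0 + (x_m - x^{(c)})$, so it remains to bound $x_m - x^{(c)}$.

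**The main step** is to bound $x_m - x^{(c)}$ using the equilibrium condition \eqref{eq:equilibrium}: $F_{load} = \sum_{i=1}^m \chi(x_i - x^{(c)})$. Since each $\chi$ is nondecreasing and $\chi(l)\le K(l-l_0)$ would be an overestimate in the wrong direction — instead I use that $\chi(x_i - x^{(c)}) \le \chi(x_m - x^{(c)}) = K(x_m - x^{(c)} - l_0)$ (the last equality because $x_m - x^{(c)} > l_0$), but more sharply, all terms are at most this one and at least one term (the vanguard) equals it, and the others are $\le$ it, while also $\ge$ their own value. The cleanest bound: drop all but the vanguard term only if the others are nonnegative; in general $F_{load} = \sum \chi(x_i-x^{(c)}) \le m\,\chi(x_m - x^{(c)})$, giving nothing useful for an upper bound on $x_m - x^{(c)}$. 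So instead I bound \emph{below}: $F_{load} \ge \chi(x_m - x^{(c)}) + \sum_{i<m}\chi(x_i-x^{(c)})$, and since the linkage buckles with no compressive force, $\chi(x_i - x^{(c)}) \ge -K l_0$ is false — actually $\chi$ is never negative-unbounded; for $x_i - x^{(c)} < -l_0$, $\chi = K(x_i - x^{(c)} + l_0) < 0$. Hmm. The right move is: $F_{load} = \sum_i \chi(x_i - x^{(c)}) \le \chi(x_m - x^{(c)}) \cdot 1 + \sum_{i<m}\chi(x_m-x^{(c)})$ is the wrong direction; I need $F_{load} \ge \chi(x_m - x^{(c)}) + \sum_{i<m}\min(\cdot)$. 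The honest approach, mirroring Lemma~\ref{lem:decreasing step}: since all $\chi(x_i - x^{(c)})\le \chi(x_1-x^{(c)})$ is false too. I will argue $F_{load} \le m K(x_m - x^{(c)} - l_0)$ is not what I want either.

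**Let me restate the key step cleanly.** I claim $\chi(x_m - x^{(c)}) \le F_{load}$. Indeed, every other term satisfies $\chi(x_i - x^{(c)}) \le \chi(x_m - x^{(c)})$ since $\chi$ is nondecreasing and $x_i \le x_m$; hence $F_{load} = \sum_i \chi(x_i-x^{(c)})$ could be less than $\chi(x_m - x^{(c)})$ if some terms are negative. But the terms are negative only for motors lagging behind the cargo with stretched linkage, and such a motor would instead have $\chi \ge -K(\text{something})$. To sidestep this entirely, I will use instead the hypothesis $s(Z) \le q$ together with $a_k \ge x^{(c)} - l_0$ and the already-known fact $x_1 \ge x_m - q \ge (x^{(c)}) - q$ ... the cleanest: if $a_k \ge x_1$ the spread does not increase and we are done; if $a_k < x_1$, then $s(Z') = x_m - a_k \le x_m - (x^{(c)} - l_0)$ and separately $x_m - x^{(c)} \le F_{load}/K + l_0$ because $\chi(x_m - x^{(c)}) = K(x_m - x^{(c)} - l_0) \le F_{load}$ — and this inequality $\chi(x_m - x^{(c)})\le F_{load}$ does hold, because among the summands in \eqref{eq:equilibrium} the vanguard's is the largest, so if it exceeded $F_{load}$ every summand would and the sum would exceed $m F_{load} \ge F_{load}$, contradiction — wait, that needs all summands positive. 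If the vanguard summand exceeds $F_{load}>0$, it is positive; the others are $\le$ it but could be negative. Hmm, so the sum could still equal $F_{load}$.

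**Resolution:** the correct bound is $\chi(x_m - x^{(c)}) \le F_{load} + \sum_{i<m}\max(0, -\chi(x_i-x^{(c)}))$, and each negative contribution is bounded... Rather than belabor this, I will follow the style of Lemma~\ref{lem:decreasing step}: split on whether $x_1 < x^{(c)} - l_0$ or $x_1 \ge x^{(c)} - l_0$. In the second case all $x_i \ge x^{(c)} - l_0$, so all $\chi(x_i - x^{(c)}) \ge 0$, hence $\chi(x_m - x^{(c)}) \le F_{load}$, giving $x_m - x^{(c)} \le F_{load}/K + l_0$ and thus $s(Z') \le F_{load}/K + 2l_0 \le q$. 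In the first case, $x_1 < x^{(c)} - l_0 \le a_k$, so the new motor is \emph{not} to the left of the rearguard, the rearguard is unchanged, and $s(Z') = s(Z) \le q$. Either way $s(Z') \le q$, which is the claim. The main obstacle, as this discussion shows, is getting the case split right so that the equilibrium inequality $\chi(x_m - x^{(c)}) \le F_{load}$ is only invoked when all linkages are non-stretched-backward; once that is organized, the rest is the elementary interval arithmetic above.
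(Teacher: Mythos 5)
Your final ``Resolution'' paragraph is exactly the paper's proof: the same case split on whether $x_1 < x^{(c)}-l_0$ or $x_1 \ge x^{(c)}-l_0$, using Lemma~\ref{lem: unique cargo eq} to rule out a new vanguard and, in the second case, the nonnegativity of every $\chi(x_i - x^{(c)})$ to get $K(x_m - x^{(c)} - l_0) \le F_{load}$ and hence $s(Z') \le F_{load}/K + 2l_0 \le q$. The argument is correct; the lengthy false starts preceding it should simply be deleted.
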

\begin{proof}
	Without any loss of generality, assume $x^{(c)}=0$, since the reference point on the microtubule is arbitrary.
	Let $X'=(x'_1, ..., x_{m+1}')=\phi_{A}(Z')$
	If $x_1<-l_0$, the transition does not affect the value of the distance because the newly attached motor will be in the range $[-l_0, l_0]$ in order to meet the KVXG rules. Indeed, we have 
	\begin{align}
		s(Z')=x_{m+1}'-x_{1}'=x_{m}-x_{1}=s(Z) \leq q
	\end{align}
	If $x_1\geq-l_0$, then $x_1'\geq-l_0$. Moreover,
	\begin{align}
		F_{load}=\sum_{i=1}^{m}\chi(x_i)\geq K(x_{m}-l_0),
	\end{align}
	implies
	\begin{align}
		x_{m+1}'-x_1'\leq x_{m}+l_0 \leq \frac{F_{load}}{K}+2l_0 \leq q.
	\end{align}
\end{proof}

\subsection{Proof of Theorem \ref{thm:bounded configuration}}
\begin{proof}
	Define, for $t\geq t_0$,
	\begin{align}
		& 0\leq \phi(t):=Pr\{s(t)> S| Z(t_0)=\overline{Z}\}=\\
			&\qquad =\sum_{s(Z)>S}P_{Z}(Z,t|\overline{Z},t_0).
	\end{align}
	At time $t_0$ we have that $\phi(t_0)=0$.
	By contradiction assume that there exists a time $t_1>t_0$ such that
	$\phi(t_1)>0$.
	We must also have that, for some $t\in(t_0,t_1)$,
	\begin{align}
		\frac{\partial}{\partial t}\phi(t)>0.
	\end{align}
	From the master equation
	\begin{align*}
		&\frac{\partial}{\partial t}\phi(t)=
			\sum_{s(Z)>S} \left\{-\sum_{Z'\in\mathcal{Z}}\lambda(Z',Z)P_{Z}(Z,t|\overline{Z},t_0)\right.+\\
		&\qquad \left. +\sum_{Z'\in\mathcal{Z}}\lambda(Z,Z')
			P_{Z}(Z',t|\overline{Z},t_0)\right\}=\\
		&=\sum_{s(Z)>S} \sum_{s(Z')\leq S}\left\{-\lambda(Z',Z)P_{Z}(Z,t|\overline{Z},t_0)\right.+\\
		&\qquad \left.+\lambda(Z,Z')
			P_{Z}(Z',t|\overline{Z},t_0)\right\}\leq\\
		&\leq \sum_{s(Z)>S} \sum_{s(Z')\leq S}
			\left\{
				\lambda(Z,Z')P_{Z}(Z',t|\overline{Z},t_0)
			\right\}
	\end{align*}
	Now we prove that, if $s(Z')\leq S$ and $s(Z)>S$, we necessarily have $\lambda(Z,Z')=0$.\\
	The only possible transitions are detachment, attachment and step.\\
	The detachment transition can only reduce the configuration distance, so $Z-Z'\neq R^{(d)}(k)$ for any $k$.\\
	If the transition is an attachment, by Lemma \ref{lem:decreasing attachment} we obtain the assertion.\\
	If the transition is a step the configuration distance increases only if the vanguard motor steps, and the increase is by $d_{s}$. If $s(Z')< S-d_{s}$,
	then $s(Z)< S$ and this is not a case we are considering because we have assumed $s(Z)>S$. Conversely, if $s(Z')\geq S-d_{s}$, by Lemma \ref{lem:decreasing step}, we have that the vanguard motor can not step, so such a transition is not possible. Then we obtain as a contradiction that, for $t\geq t_0$,
	\begin{align}
		\frac{\partial}{\partial t}\phi(t)\leq 0.
	\end{align}
\end{proof}


\bibliographystyle{IEEEtran}
\bibliography{../../../bibliography/motors.bib}

\end{document}